\newcommand{\tr}{^{\prime}}
\def\b#1{\mbox{\boldmath $#1$}}    % - \b: grassetto in formula
\def\bl#1{\mbox{\footnotesize \boldmath {$#1$}}} % - \bl: grassetto per apice
\renewcommand{\th}{\theta}
\newtheorem{prop}{Proposition}
\begin{document}
%
%\begin{frontmatter}
\title{Three-step estimation of latent Markov models with covariates}
\author{Francesco Bartolucci\footnote{Department of Economics, University of Perugia, Via A. Pascoli, 20, 06123
Perugia.}  \and Giorgio E. Montanari\footnote{Department of Political Sciences, University of Perugia, Via A. Pascoli, 20, 06123
Perugia.} \and Silvia
Pandolfi$^*$\footnote{E-mail: pandolfi@stat.unipg.it}}
\date{}
\maketitle \vspace*{-0.5cm}

%\runtitle{Three-step estimation}

\begin{abstract}
\begin{singlespace}
We propose a modified version of the three-step estimation method for the latent
class model with covariates, which may be used to estimate latent Markov models for longitudinal data. 
The three-step estimation approach
we propose is based on a preliminary clustering 
of sample units on the basis of the time specific responses only.  This approach represents
an useful estimation tool when a large number of
response variables are observed at each time occasion. In such a context, full maximum likelihood estimation, which is typically based on the 
Expectation-Maximization algorithm, may have some drawbacks, essentially due to the presence of many local maxima of the 
model likelihood. Moreover, the EM algorithm may be particularly slow to converge, and may become unstable
with complex LM models. We prove the consistency of the proposed three-step estimator when the number of response variables tends to infinity. 
 We also show the results of a simulation study aimed at evaluating the performance of the
proposed alternative approach with respect to the full likelihood method. We finally illustrate an application to
a real dataset on the health status of elderly people hosted in Italian nursing homes. 
\end{singlespace}
\noindent \vskip7mm \noindent {\sc Keywords:} EM algorithm,
Latent class model, Nursing homes, Pseudo likelihood methods
\end{abstract}\newpage

%\begin{keyword}
%\kwd{EM algorithm} \kwd{Latent class model} \kwd{Nursing homes} 
%\end{keyword}
%
%\end{frontmatter}

\section{Introduction}  
In many applications involving longitudinal data, the interest is
often focused on the evolution of a latent characteristic of a group
of individuals over time, which is measured by one or more
occasion-specific response variables.  
This characteristic may
correspond, for instance, to the quality-of-life of subjects
suffering from a certain disease, to a certain type of ability, or to the tendency to commit crimes. 

The latent Markov (LM) models of \cite{wigg:73} may be usefully applied in such a context; for a review see \cite{bart:et:al:13}.
The main
assumption underlying these models, known as {\em local independence}, is that the response variables are
conditionally independent given a latent process which follows a Markov chain with a finite
number of states and which is typically homogeneous and of first order. The basic idea behind this assumption is that the latent process fully
explains the observable behavior of a subject; moreover, the latent
state to which a subject belongs at a certain occasion only depends
on the latent state at the previous occasion. 
In the presence of multivariate outcomes (more response variables for each time occasion),
the model allows for a dynamic clustering of subjects also depending on individual covariates.

LM models may be estimated by a Full Maximum Likelihood (FML) method based on the Expectation-Maximization (EM)
algorithm \citep{baum:et:al:70,demp:lair:rubi:77} and recursions which are well known in the 
hidden Markov literature \citep{baum:et:al:70,zucc:macd:09}.  
However, especially in the presence of many response variables for each time occasion and individual covariates, 
the FML estimation may have some drawbacks, as illustrated in more detail in the following.

In this paper, we propose an estimation approach 
which may be seen as an extension of the three-step estimation method for the
latent class (LC) model with covariates \citep{vermunt:2010}. 
This new estimation method is based on a preliminary dynamic clustering of subjects on the basis of only the time-specific responses, and
may be seen as a stable alternative to the  
full likelihood method, which also allows for fast estimation of the model parameters.

We recall that, when applied to the LC model with covariates, the three-step estimation approach consists of first fitting a basic LC
model for the set of response variables ignoring the covariates. Then, the subjects are assigned
to the latent classes on the basis of their posterior class membership
probabilities. Finally, the association between class membership and
individual covariates is investigated through the estimation of a
suitable multinomial logit model \citep[see, among others,][]{bolck:et:al:04,lu:thomas:08,bakk:et:al:13}. 

In the proposed version of the three-step approach, hereafter 3S, every sample unit is not
strictly assigned to a latent state at each occasion and this latent state can change 
across time. In this way, even the parameters affecting the transition probabilities of the 
latent Markov chain may be estimated. We apply this estimation method to the basic LM model
and to its extended version which includes individual covariates.

The performances of the proposed 3S estimation method are assessed through a Monte Carlo simulation study
in comparison with those of the FML method. We 
also illustrate the approach by an application based on a real dataset 
which derives from a project, named ULISSE,
on the health status of elderly people hosted in Italian nursing homes.

The paper is organized as follows. In Section
\ref{sec:lm} we outline the basic LM model and its extended version with individual covariates.
Section \ref{sec:EM} illustrates the estimation of the
model on the basis of the standard FML method, whereas 
in Section \ref{sec:three-step} we outline the proposed 3S estimation approach. In Section \ref{sec:5}
we illustrate the results of the simulation study, aimed at evaluating the 
performances of the proposed estimation approach, and in Section \ref{sec:ulisse} we show the results of the application to real data. 
Section \ref{sec:6} contains main conclusions.

\section{Latent Markov model}
\label{sec:lm}  
In the following, we illustrate the basic latent Markov (LM) model and its
extended version which includes individual covariates.

\subsection{Basic LM for multivariate data}
\label{sec:basic}
Let consider the multivariate case in which we observe a vector
$\b Y^{(t)}$ of $r$ categorical response variables, $Y_{j}^{(t)}$, with $c_j$ categories, from 0 to $c_j-1$, $j=1,\ldots,r$, which are available at the $t$-th
time occasion, $t=1,\ldots,T$.
Let also $\b Y$ be the vector made up of the union of the vectors  $\b Y^{(t)}$, $t=1,\ldots,T$.

The model assumes the existence of a
latent process $\b U =(U^{(1)},\ldots,U^{(T)})$
which affects the distribution of the response variables. Such a process is assumed
to follow a first-order Markov chain with state space
$\{1,\ldots,k\}$, where $k$ is the number of latent states. Under the
{\em local independence} assumption, the response vectors $\b Y^{(1)},\ldots,\b Y^{(T)}$ are
assumed to be conditional independent given the latent process $\b U$.  Moreover, the elements $Y_{j}^{(t)}$ within $\b Y^{(t)}$,
$t=1,\ldots,T$, are conditionally independent given $U^{(t)}$. This
assumption leads to a strong simplification of the model but it can be
relaxed by allowing conditional serial dependence or
conditional contemporary dependence \citep[for details, see][]{bart:et:al:13}. %Moreover, higher-order extensions 
%to the Markov assumption can be made.
Note also that the LM model may be seen as an extension of the latent class (LC) model \citep{laza:50,laza:henr:68,good:74b}, 
in which the assumption that each subject belongs to the same latent class throughout  the survey is suitable 
relaxed.

The model formulation is based on the following conditional response probabilities:
\begin{equation}\label{eq:cond}
\phi_{jy|u}=p(Y^{(t)}_{j}=y|U^{(t)}=u), \quad j=1,\ldots,r,\;\;\; u=1,\ldots,k,\;\;\; y=0,\ldots,c_j-1,
\end{equation}
which are collected in the matrix $\b \Phi_j$ of dimension $c_j \times k$.
Moreover, parameters of the latent process are the initial probabilities
\[
\pi_{u}= p(U^{(1)}=u),\quad
u=1,\ldots,k,
\]
collected in the vector $\b \pi$, and the transition probabilities
\[
\pi_{v|u}= p(U^{(t)}=v|U^{(t-1)}=u), \quad t=2,\ldots,T,\;\;\; u,v=1,\ldots,k,
\]
collected in the matrix $\b \Pi$. Note that all these probabilities do not depend on the specific sample unit since,
in its basic version, the LM model does not include individual covariates. Moreover, in this version the conditional
response probabilities and the transition probabilities are time-homogeneous.

The above assumptions imply that the distribution of $\b U$ may be expressed as
\[
p(\b U = \b u) = \pi_{u^{(1)}}\prod_{t=2}^T\pi_{u^{(t)}|u^{(t-1)}},
\]
where $\b u = (u^{(1)},\ldots,u^{(T)})$ denotes a realization of $\b U$. Moreover, the conditional distribution of $\b Y$ given  
$\b U$ is defined as 
\[
p(\b Y= \b y|\b U= \b u) = \prod_{t=1}^T \phi_{\bl y^{(t)}|u^{(t)}},\]
where $\b y $ is made by the subvectors $\b y^{(t)}= (y_1^{(t)},\ldots,y_r^{(t)})$, and, in general, due to the assumption of local independence, 
we define 
\[
\phi_{\bl y^{(t)}|u} =
\prod_{j=1}^r \phi_{jy_j^{(t)}|u}.  
\]
Finally for the {\em manifest distribution} of $\b Y$ 
we have
\begin{eqnarray}
p(\b y) = p(\b Y=\b y) =  \sum_{\bl u}\pi_{u^{(1)}}\pi_{u^{(2)}|u^{(1)}}\cdots
\pi_{u^{(T)}|u^{(T-1)}}\phi_{\bl y^{(1)}|u^{(1)}}\phi_{\bl y^{(2)}|u^{(2)}}\cdots
\phi_{\bl y^{(T)}|u^{(T)}}. \label{eq:dist3}
\end{eqnarray}
It is important to note that computing $p(\b y)$ as expressed in (\ref{eq:dist3})
involves a sum extended to all the possible
$k^T$ configurations of the vector $\b u$;  this typically requires
a considerable computational effort.
In order to efficiently compute such a probability, we can use the Baum-Welch
forward recursion \citep{baum:et:al:70,welch:2003}; see \cite{bart:et:al:13} for details.

%:
\subsection{Extended LM model with individual covariates}\label{sec:lmcov} 

Let $\b x^{(t)}$ denote the vector of individual
covariates which are available at the $t$-th time occasion, and let $\b x$ denote the vector of all the
individual covariates, which is obtained by stacking the vectors $\b
x^{(1)},\ldots,\b x^{(T)}$. In this paper, we consider the case in which the covariates are included
in the latent model, so as to affect the initial and transition probabilities. In this situation,
the assumption of local independence and the assumption that the latent process is 
of first order still hold.  Moreover, the conditional response
probabilities are again time-homogeneous and equal for all subjects as in (\ref{eq:cond}).
The initial and transition probabilities depend instead on the individual covariates
through a multinomial logit model
\begin{eqnarray}  \label{eq:be}
\log \frac{p(U^{(1)}=u|\b x^{(1)})}{p(U^{(1)}=1|\b x^{(1)})} &=& 
\beta_{0u}+(\b x^{(1)})\tr \b \beta_{1u},\quad u\geq 2,\\ \label{eq:Ga}
\log \frac{p(U^{(t)}=v|U^{(t-1)}=u,\b x^{(t)})}
{p(U^{(t)}=u|U^{(t-1)}=u,\b x^{(t)})}&=&\gamma_{0uv}+(\b x^{(t)})\tr
\b\gamma_{1uv}, \quad t\geq 2,\;\;\;u\neq v. 
\end{eqnarray}
In the above expressions, $\b\beta_u=(\beta_{0u},\b\beta_{1u}\tr)\tr$ and $\b\gamma_{uv}=(\gamma_{0uv},\b\gamma_{1uv}\tr)\tr$ are parameter vectors to be estimated which are collected in the matrices $\b\beta$ and  $\b\Gamma$.

The manifest distribution of $\b Y$ %$p(\b y_i|\b x_i)=p(\b Y_i= \b y_i|\b x_i )$, 
is now conditional to $\b x$ and may be easily
defined as an extension of (\ref{eq:dist3}). The corresponding probability function is denoted by 
$p(\b y | \b x)$.

\section{Maximum likelihood estimation}
 \label{sec:EM}

Given a sample of $n$ independent units that provides the response vectors 
$\b y_1, \ldots, \b y_n$, and given the vectors of covariates $\b x_1,\ldots,\b x_n$, the model log-likelihood assumes the following expression
\[
\ell(\b\th) = \sum_{i=1}^n\log p(\b y_i|\b x_i ).
\]
Note that in this case each vector $\b y_i$ is a realization of $\b Y_i$ which is made up of the 
subvectors $\b y_i^{(t)}$, $t=1,\ldots,T$, that, in turn, have elements $y_{ij}^{(t)}$, $j=1,\ldots,r$. 
Moreover, $\b x_i$ may be decomposed into the time-specific subvectors of covariates
$\b x_i^{(1)},\ldots,\b x_i^{(T)}$. Note also that  
$p(\b y_i|\b x_i )$ corresponds to the
manifest probability of the responses provided by subject $i$, given the covariates,  
and $\b\th$ is the vector of all free parameters affecting $p(\b y_i| \b x_i)$.
The likelihood function can be maximized by the EM algorithm \citep{baum:et:al:70,demp:lair:rubi:77}, as described in the following section.
\subsection{Expectation-Maximization algorithm}
The EM algorithm is based on the complete data log-likelihood that we could compute
if we knew the latent state of every subject at each occasion. The complete data
are represented by the pairs $(\b u_i,\b y_i)$, $i=1,\ldots,n$, where $\b u_i =(u_i^{(1)},\ldots,u_i^{(T)} )$ is the 
sequence of latent states of subject $i$.  The complete data log-likelihood has the following expression 
\begin{eqnarray}
\hspace*{-1.5cm}\ell^*(\b\th) &=&\sum_{i=1}^n \Bigg[ \sum_{j=1}^r \sum_{t=1}^T\sum_{u=1}^k\sum_{y=0}^{c_j-1} a_{ijuy}^{(t)}\log \phi_{jy|u}+\sum_{u=1}^kb_{iu}^{(1)}\log p(U_i^{(1)}=u|\b x_i^{(1)})
\nonumber\\
&& +
\sum_{t=2}^T\sum_{u=1}^k\sum_{v=1}^k b_{i u v}^{(t)}\log p(U_i^{(t)}=v|U_i^{(t-1)}=u,\b x_i^{(t)})\bigg],\label{eq:comp_lk_cov}   
\end{eqnarray}
where $a_{ijuy}^{(t)}$ is the indicator variable for subject $i$ responding by $y$ 
at occasion $t$ to response variable $j$ and belonging to latent state $u$ at the same occasion,
$b_{iu}^{(t)}$ is the indicator variable for subject $i$ being in latent state $u$ at occasion $t$, 
whereas  $b_{iuv}^{(t)}$ is the indicator variable for the transition, of subject $i$, from state $u$ at occasion $t-1$ to state
$v$ at occasion $t$. 

The EM algorithm alternates the following two steps until convergence:
\begin{itemize}
\item {\bf E-step}: it consists in computing the posterior expected value of each indicator variable 
involved in (\ref{eq:comp_lk_cov}) by suitable forward-backward recursions \citep{baum:et:al:70};
\item {\bf M-step}: it consists in maximizing the complete data log-likelihood
expressed as in (\ref{eq:comp_lk_cov}), with each indicator variable
substituted by the corresponding expected value. How to maximize
this function depends on the specific formulation of the model and,
in particular, on whether the covariates are included in the
measurement model or in the latent model.
\end{itemize}

Even if the EM algorithm is typically used to estimate LM models, in real 
applications the FML estimators may have some drawback. In particular, it is not uncommon 
that the number of response variables for each time occasion is very large, as well as the number of estimated latent
states. As an example, in the application illustrated in Section \ref{sec:ulisse},
the dataset consists of 
$r=75$ polytomous items, with a different number of response categories, and 28 individual covariates, for which we estimate
an LM model with $k=4$ latent states. 
In similar circumstances, the model likelihood may present many local maxima, requiring then a very high number
of random starting values to find the global maximum.
Moreover, the EM algorithm may be particularly slow to converge, and may become unstable with
complex LM models, which also include a large number of time-fixed or time-varying covariates. All these problems 
cannot be solved by a direct maximization algorithm, such as the Newton-Raphson algorithm 
which is known to have
instability problems even in simpler cases \citep[see, among others,][]{turner:08}. 
The proposed 3S estimation approach allows us to overcome some of these 
problems without being computationally intensive. Moreover, this strategy may also be useful to 
define suitable starting values for the EM algorithm to be used in the full-maximum likelihood approach.

\section{Proposed three-step estimation approach}\label{sec:three-step}

In the literature on LC models, the three-step approach \citep{vermunt:2010} represents an 
alternative to the FML method to deal with the inclusion of individual covariates.  The latter approach 
may indeed be impractical in some applications, especially when the number of covariates is large, as typically happens
in exploratory studies.  Moreover, the simultaneous modeling of covariates and latent
variables may introduce additional complexity in selection and estimation of models, with a potentially large number 
of parameters \citep{bakk:et:al:13}.  

In this section, we introduce a modified version of the three-step 
approach in order to estimate the basic 
LM model and its extended version with individual covariates. More in detail,
the proposed 3S method is based on the following steps:
\begin{itemize}
\item {\bf Step 1:} Fit a basic LC model for the set of response
variables in which the responses provided by the same
sample unit at different occasions are considered as corresponding to
separated units. On the basis of this preliminary fitting, performed via the EM algorithm, we
obtain the final estimates of the conditional response probabilities,
$\hat{\phi}_{jy|u}$, and the ``temporary'' estimates of the marginal probabilities of the latent states
 $\hat{\rho}_u=\hat{p}(U_i^{(t)} = u)$.
\item {\bf Step 2:} For each subject $i$ compute the posterior expected value of $b_{iu}^{(t)}$ and $b_{iuv}^{(t)}$ (see equation (\ref{eq:comp_lk_cov})) only on the basis of the results of the first step as \vspace{1mm}
\begin{eqnarray*}
\tilde{b}_{iu}^{(t)}&\propto&\hat{\rho}_u\:\hat{p}( \b Y_i^{(t)} = \b y_i^{(t)}|U_i^{(t)} = u)=
\hat{\rho}_u\prod_j\hat{\phi}_{jy_{ij}^{(t)}|u}, \quad t=1,\ldots,T,\\
\tilde{b}_{iuv}^{(t)}&=&\tilde{b}_{iu}^{(t-1)}\tilde{b}_{iv}^{(t)},\quad t = 2,\ldots,T.
\end{eqnarray*}  
\item {\bf Step 3:} Maximize the components of the complete data log-likelihood 
involving the latent structure parameters
\begin{eqnarray}\label{eq:lbe}
\tilde{\ell}^*_1(\b\beta) &=&\sum_i \sum_u \tilde{b}_{iu}^{(1)}\log p(U_i^{(1)} = u|\b x_i^{(1)}),\\  \label{eq:lGa}
\tilde{\ell}^*_2(\b\Gamma) &=&\sum_i\sum_{t\geq 2}\sum_u \sum_v \tilde{b}_{iuv}^{(t)}\log p(U_i^{(t)}=v|U_i^{(t-1)}=u,\b x_i^{(t)}).
\end{eqnarray}
\end{itemize}

For the basic LM model, the third step simply consists in computing
\begin{eqnarray*}
\tilde{\pi}_u &\propto& \sum_i\tilde{b}_{iu}^{(1)},\\ 
\tilde{\pi}_{v|u} &\propto& \sum_i\sum_{t\geq 2}\tilde{b}_{iuv}^{(t)}.
\end{eqnarray*} 
When we deal with the extended LM model with covariates, the last step consists of estimating $\b \beta$
and $\b \Gamma$, defined in (\ref{eq:be}) and (\ref{eq:Ga}), by fitting multinomial logit models based on a weighed likelihood as defined in (\ref{eq:lbe}) and (\ref{eq:lGa}); the corresponding estimates are denoted
by $\tilde{\b \beta}$ and $\tilde{\b \Gamma}$, respectively.

In order to overcome some limitations in the estimation of the parameters of the latent process,
we also consider an improved version of the 3S approach, termed as 3S-IMP, in which the second and the third steps are
iterated until convergence, while keeping the results from the first step fixed.
More in detail, after an initial estimate of the latent structure parameters, each Step 2 is based on computing
\begin{eqnarray*}
\tilde{b}_{iu}^{(t)}&\propto&\tilde{p}(U_i^{(t)}=u|\b x_i^{(t)})\;
\hat{p}(\b Y_i^{(t)}=\b y_i^{(t)}|U_i^{(t)}=u),\quad t=1,\ldots,T,\\
&&\\
\tilde{b}_{iuv}^{(t)}&\propto&\tilde{p}(U_i^{(t-1)}=u|\b x_i^{(t-1)})\;\hat{p}(\b Y_i^{(t-1)}=\b y_i^{(t-1)}|U_i^{(t-1)}=u)\\
&&\times\: \tilde{p}(U_i^{(t)}=v|U_i^{(t-1)}=u,\b x_i^{(t)})\;\hat{p}(\b Y_i^{(t)}=\b y_i^{(t)}|U_i^{(t)}=v),
\quad t=2,\ldots,T.
\end{eqnarray*}
In the above expressions, the probabilities $\tilde{p}(\cdot|\cdot)$ are based on the estimates from the previous iteration. Finally, each Step 3
is performed in the same way as in the initial version of the algorithm.

As we shall see in more detail in the following section, the proposed 3S estimation algorithm produces consistent estimates
of the conditional response probabilities $\hat{\phi}_{jy|u}$, with easily interpretable results.  
Moreover, since the first step is based on fitting a basic LC model, several starting values may be easily tried and the global maximum of its likelihood may be found with a reasonable effort. 
Another advantage with respect to the full likelihood method 
is that estimation of $\pi_u$ and $\pi_{v|u}$, or $\b \beta $ and $\b \Gamma$ when we include individual 
covariates, is fast and there are no problems of multiple solutions. Finally, it is worth noting
that the behavior of the estimator of the parameters of the latent process is expected
to improve as the number of response variables increases, since more information 
about clustering is provided by the response variables.  
%%SP: qui Montanari dice che non si capisce bene perchŽ ci si aspetta che i risultati migliorino con lÕaumento delle variabili risposta, se  una nostra supposizione o se  noto dalla letteratura, e in questo caso vorrebbe una citazione. A mio parere la frase  sufficientemente chiara
 
\subsection{Properties of the three-step estimator}

In the following we discuss the asymptotic properties of the 3S estimator that hold as the sample size and the number of response variables tend to infinity,
keeping the number of latent states fixed.  

\begin{prop}\label{prop:1}
When the sample size $n$ tends to infinity, the three-step estimation method described in Section \ref{sec:three-step} produces consistent
estimates of the conditional response probabilities  $\hat{\phi}_{jy|u}$, provided that the data
are generated from a LM model with a given number of states $k$.
\end{prop}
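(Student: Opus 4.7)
The plan is to view Step 1 as a composite (pseudo-) likelihood $M$-estimator, identify the population criterion it asymptotically maximizes, and then invoke identifiability of the latent-class (LC) mixture to conclude that this criterion pins down the true emission probabilities. Writing out Step 1, the objective is
$$\ell_1(\b\rho,\b\Phi)=\sum_{i=1}^n\sum_{t=1}^T\log\left(\sum_{u=1}^k\rho_u\prod_{j=1}^r\phi_{jy_{ij}^{(t)}|u}\right),$$
so $\ell_1/n$ is an empirical mean of the subject-level summaries $L_i(\b\rho,\b\Phi)=\sum_t\log p_{\bl\rho,\bl\Phi}(\b Y_i^{(t)})$. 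Under the true LM model the $n$ subjects are i.i.d., and because the state and response spaces are finite $|L_i|$ is bounded uniformly on any compact subset of the interior of the parameter space; the strong law of large numbers therefore gives
$$\frac{1}{n}\ell_1(\b\rho,\b\Phi)\longrightarrow Q(\b\rho,\b\Phi):=\sum_{t=1}^T\E_{\b\th^*}\bigl[\log p_{\bl\rho,\bl\Phi}(\b Y^{(t)})\bigr]$$
almost surely. Within-subject dependence across $t$ affects only the variance of this approximation, not its limit.

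The next step is to identify the unique maximizer of $Q$. Because the emissions $\phi_{jy|u}$ are time-homogeneous,
$$Q(\b\rho,\b\Phi)=T\sum_{\bl y}\bar p^{*}(\b y)\log p_{\bl\rho,\bl\Phi}(\b y),\qquad \bar p^{*}(\b y)=\sum_{u=1}^k\bar\rho_u^{*}\prod_{j=1}^r\phi_{jy_j|u}^{*},$$
with $\bar\rho_u^{*}=T^{-1}\sum_t p(U^{(t)}=u)$ evaluated at the true LM parameters. The pooled marginal $\bar p^{*}$ is therefore itself a $k$-component LC mixture, with the true emissions $\b\Phi^{*}$ and weights $\bar{\b\rho}^{*}$. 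Gibbs' inequality then says that $Q$ is maximized exactly when $p_{\bl\rho,\bl\Phi}=\bar p^{*}$. Invoking standard identifiability results for the LC model (e.g.\ Kruskal / Allman--Matias--Rhodes type conditions on the column ranks of the $\b\Phi_j$, which hold once $r$ is large enough relative to $k$), this uniquely determines $\b\Phi=\b\Phi^{*}$ up to the unavoidable relabelling of the $k$ latent states. A standard uniform-convergence argument for $M$-estimators on a compact parameter space then lifts the pointwise convergence $\ell_1/n\to Q$ to consistency of the Step 1 argmax, yielding $\hat{\b\Phi}\to\b\Phi^{*}$ in probability.

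The main technical obstacle is the identifiability step: one must point to explicit sufficient conditions on $r$, $k$ and the $\b\Phi_j$ under which the $k$-component LC decomposition of $\bar p^{*}$ is unique, and verify that these conditions are implied by the assumptions of the proposition. Label switching is dealt with in the usual way, for instance by restricting to a canonical ordering of the mixing weights. It is worth emphasising that the misspecification introduced by treating the $nT$ time-specific response vectors as if they were independent affects only the asymptotic variance---hence standard errors, which would require a sandwich correction---but not the location of the population maximum, and therefore does not interfere with the present consistency claim.
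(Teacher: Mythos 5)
Your proof follows essentially the same route as the paper's: both rest on the observation that pooling the $T$ time-specific response vectors yields a $k$-component latent-class mixture whose components carry the true emission probabilities $\phi_{jy|u}$ and whose weights are the time-averaged marginal state probabilities $\bar\lambda_u$, so that Step 1 is a correctly specified (pseudo-)likelihood for $\b\Phi$. You are in fact more careful than the paper, which asserts consistency directly from this mixture representation without spelling out the LC identifiability conditions, Gibbs' inequality, or the uniform-convergence step that you make explicit; the only piece you omit is the covariate case, which the paper handles by integrating $P(U^{(t)}=u^{(t)}\mid\b x)$ over the covariate distribution to recover the same mixture form.
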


\begin{proof}
Recall that, for the basic LM model for multivariate data, the manifest distribution of each 
vector of response variables $\b Y^{(t)}$ for the $t$-th time occasion may be written as follows:
\begin{eqnarray}\label{eq:dist_proof}\nonumber
p(\b Y^{(t)} = \b y^{(t)}) &=& \sum_{\bl u^{(t)}} \phi_{\bl y^{(t)}|u^{(t)}} \pi_{u^{(1)}} \pi_{u^{(2)}|u^{(1)}}\ldots \pi_{u^{(t)}|u^{(t-1)}}\\ \nonumber
& = &  \sum_{\bl u^{(t)}} \phi_{\bl y^{(t)}|u^{(t)}} \lambda_{u^{(t)}}^{(t)}\\
& = &  \sum_{\bl u^{(t)}} \left[\prod_{j=1}^r \phi_{j y_j^{(t)}|u^{(t)}}\right] \lambda_{u^{(t)}}^{(t)},
\end{eqnarray}
where \[\lambda_{u^{(t)}}^{(t)}= P(U^{(t)}=u^{(t)})= \pi_{u^{(1)}}\prod_{s=2}^t \pi_{u^{(s)}|u^{(s-1)}}.\]
Consider also an alternative data generating process in which, at the first step, we randomly choose the 
time occasion $t$ and, at the second step, we obtain $\tilde{\b y}$ as the response configuration of the sample unit 
corresponding to the selected occasion $t$.  Then, for an arbitrary $t$ we have
 \[
p(\tilde{\b y}) =  \sum_{u} \left[\prod_{j=1}^r \phi_{j y_j|u}\right]\bar{\lambda}_{u},
\]
where $\bar{\lambda}_u = 1/T \sum_t \lambda_{u^{(t)}}^{(t)}$ may be interpreted as the average probability,
overall the different time occasions, of being in state $u$. Therefore, by fitting the basic LC model
for the set of response variables in which the responses provided by the same subject at different occasions are
considered as separated sample units, corresponds to consider a likelihood for all vectors $\tilde{\b y}$, 
whose maximization allows us to obtain consistent estimates of the conditional response probabilities $\phi_{jy|u}$.

When we deal with the extended LM model with individual covariates, expression (\ref{eq:dist_proof}) becomes
\begin{eqnarray*}
p(\b Y^{(t)} = \b y^{(t)}) &=& \int_{\b x} p(\b Y^{(t)} = \b y^{(t)}|\b x) f(\b x) \mathrm{d}\b x \\
&=& \int_{\b x} \left[\sum_{\bl u^{(t)}} p(\b Y^{(t)}=\b y^{(t)}| U^{(t)}=u^{(t)})P(U^{(t)}=u^{(t)}| \b x)\right] f(\b x) \mathrm{d}\b x \\
&=& \int_{\b x} \left[\sum_{\bl u^{(t)}} \phi_{\bl y^{(t)}|u^{(t)}} P(U^{(t)}=u^{(t)}| \b x)\right] f(\b x) \mathrm{d}\b x \\
& = &  \sum_{\bl u^{(t)}} \phi_{\bl y^{(t)}|u^{(t)}}  \int_{\b x} P(U^{(t)}=u^{(t)}| \b x) f(\b x) \mathrm{d}\b x\\
& = & \sum_{\bl u^{(t)}} \phi_{\bl y^{(t)}|u^{(t)}} P(U^{(t)}=u^{(t)}) =  \sum_{\bl u^{(t)}} \left[\prod_{j=1}^r \phi_{j y_j^{(t)}|u^{(t)}}\right] \lambda_{u^{(t)}}^{(t)}.
\end{eqnarray*}
We then conclude that, even in the presence of individual covariates, as $n$ goes to infinity,
the 3S estimator of $\phi_{jy|u}$ is consistent.
\end{proof}

\begin{prop}
When the sample size $n$ and the number of response variables $r$ tend to infinity, the three-step estimation method described in Section \ref{sec:three-step} produces consistent
estimates of the parameters of the latent process, provided that the data
are generated from an LM model with a given number of states $k$, and that exists a fixed constant $\epsilon > 0$ such that $\sum_y(\phi_{jy|u}-\phi_{jy|v})^2\geq \epsilon$, $\forall \; v\neq u$, $\forall \; j$.
\end{prop}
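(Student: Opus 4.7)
My plan is to show that, as $r \to \infty$, the posterior state assignments $\tilde b_{iu}^{(t)}$ collapse to the true latent state indicators, so that Step~3 effectively maximizes the log-likelihood of a complete-data model in which the latent sequence is observed; consistency in $n$ then follows by standard arguments for multinomial (or multinomial-logit) maximum likelihood.

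The first and main ingredient is a classification-consistency lemma. Fix a subject $i$, an occasion $t$, and suppose the true underlying state is $u^*$. By Proposition~\ref{prop:1}, $\hat\phi_{jy|u} \to \phi_{jy|u}$ as $n\to\infty$, so I can work as if Step~1 returned the true emission parameters (a standard continuity/Slutsky argument handles the actual estimated values). For any $u' \neq u^*$, write
\[
\log\frac{\hat p(\b Y_i^{(t)}=\b y_i^{(t)}\mid u^*)}{\hat p(\b Y_i^{(t)}=\b y_i^{(t)}\mid u')}
\;=\;\sum_{j=1}^r \log\frac{\phi_{jY_{ij}^{(t)}|u^*}}{\phi_{jY_{ij}^{(t)}|u'}}+o_p(r).
\]
Conditional on $U_i^{(t)}=u^*$, each summand has mean $\mathrm{KL}(\phi_{j\cdot|u^*}\|\phi_{j\cdot|u'})$. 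Using Pinsker's inequality (or equivalently the $\chi^2$--KL bound) together with the hypothesis $\sum_y(\phi_{jy|u^*}-\phi_{jy|u'})^2\geq \epsilon$, each Kullback-Leibler term is bounded below by a positive constant depending only on $\epsilon$. Hence the sum drifts to $+\infty$ at rate $\Omega(r)$, and a variance/Chebyshev (or Hoeffding, since the $\log$-ratios can be bounded by assuming $\phi_{jy|u}$ stay in a compact subset of the simplex) bound shows that the log-ratio is positive with probability tending to~1. Combined with $\hat\rho_u$ bounded away from 0 and $\infty$, this yields $\tilde b_{iu^*}^{(t)} \stackrel{p}{\to} 1$ and $\tilde b_{iu'}^{(t)} \stackrel{p}{\to} 0$ for all $u'\neq u^*$, uniformly over $i$ and $t$ (a union bound over the finite state space and the finite $T$ suffices).

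Once this is established, the product construction $\tilde b_{iuv}^{(t)} = \tilde b_{iu}^{(t-1)}\tilde b_{iv}^{(t)}$ correctly identifies the observed transition $(U_i^{(t-1)},U_i^{(t)})$ with probability tending to~1, even though it treats the two occasions as independent—both factors are concentrated on a single state. Therefore the weighted log-likelihoods in (\ref{eq:lbe}) and (\ref{eq:lGa}) converge (in $r$) to the complete-data log-likelihoods one would form if the latent sequences $\b u_i$ were observed. Maximizing those complete-data log-likelihoods is, for the basic LM model, a plain multinomial MLE for $\pi_u$ and $\pi_{v|u}$ and, for the covariate version, a standard multinomial-logit MLE for $\b\beta$ and $\b\Gamma$. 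Both are consistent as $n\to\infty$ under mild regularity (identifiability of the logit coefficients, non-degeneracy of the covariate distribution, ergodicity of the underlying chain so that every state/transition is visited with positive limiting frequency). This gives consistency of $\tilde{\b\pi},\tilde{\b\Pi}$ or $\tilde{\b\beta},\tilde{\b\Gamma}$ along any joint sequence $(n,r)\to\infty$.

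The technical obstacle I expect is the interplay between the two limits. One cannot simply send $r\to\infty$ first and then $n\to\infty$, because the emission estimates from Step~1 depend on $n$, while the classification-consistency argument above needs $\hat\phi_{jy|u}\approx\phi_{jy|u}$. The clean way to handle this is to show uniform consistency of $\hat\phi$ on compact subsets of the simplex (from Proposition~\ref{prop:1} plus an argument that the $L^2$-separation $\epsilon$ is preserved with high probability for the estimated emissions), and then to exhibit a joint rate $r=r(n)\to\infty$ slowly enough that the large-deviation drift $\Omega(r)$ dominates the $O_p(r\|\hat\phi-\phi\|_\infty)$ perturbation. Everything else reduces to applying Proposition~\ref{prop:1} and classical MLE consistency to the effectively complete-data problem.
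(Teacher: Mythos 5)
Your proposal follows essentially the same route as the paper's own proof: both rest on the observation that the posterior state probabilities computed in Step 2 concentrate on the true latent state as $r\to\infty$, so that Step 3 reduces to a complete-data maximum likelihood problem whose consistency in $n$ is standard. The paper's argument is only a brief sketch of this idea, whereas you supply the missing substance --- the Kullback--Leibler lower bound obtained from the $\epsilon$-separation condition via Pinsker's inequality, the concentration of the log-likelihood ratio (which requires the additional, reasonable assumption that the $\phi_{jy|u}$ stay in a compact subset of the interior of the simplex), and the control of the interplay between the $n\to\infty$ and $r\to\infty$ limits --- so your version is, if anything, more complete than the published one.
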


\begin{proof}
Consider that, at Step 2 of the proposed 3S estimation
method, we can assign the subjects to the different latent states for each time occasion on the
basis of the estimated posterior probabilities, that may be expressed as
\begin{eqnarray*}
p(U_i^{(t)} = u|\b Y_i^{(t)}=\b y_i^{(t)}) &= &\frac{p(\b Y_i^{(t)}=\b y_i^{(t)}| U_i^{(t)}=u)\,p(U_i^{(t)}=u)}{p(\b Y_i^{(t)}=\b y_i^{(t)})}\\
  									& = & \frac{ \left(\prod_{j=1}^r \phi_{j y_{ij}^{(t)}|u}\right)p(U_i^{(t)}=u)}{p(\b Y_i^{(t)}=\b y_i^{(t)})}.
\end{eqnarray*}
When the sample size $n$ and the number of response variables $r$ tend to infinity, provided that these variables are not independent 
of the latent variable
defining the latent states, the allocation of the subject
tends to the true one with probability which tends to 1 and, therefore, the parameters of the
latent process are consistently estimated.

\end{proof}

\section{Simulation study}\label{sec:5}

In the following, we rely on a Monte Carlo simulation study aimed at evaluating
the behavior of the proposed 3S and 3S-IMP estimation algorithms, illustrated in Section \ref{sec:three-step}, in comparison with the FML method under a benchmark design and other possible scenarios.
We consider both estimation of the basic LM model and of the extended LM model with individual covariates.
In particular, we generate sample data with certain population parameters and we evaluate the 
performance of the two estimation methods in terms of bias, standard error, and root mean square error
on the basis of 100 simulated samples. 

\subsection{Basic LM model}\label{sec:5basic}

For the basic LM model, the benchmark design (Scenario 1) is based on the following setting:
\begin{itemize}
\item sample size: $n=500$;
\item number of time occasions: $T=5$;
\item number of response variables: $r=5,10,20,50$;
\item number of categories of the response variables: $c_j=2$, $j=1,\ldots,r$;
\item number of latent states: $k=2$,
\item  conditional response probability matrix: 
$\b \Phi_j = \begin{pmatrix} 
                          0.7  & 0.3    \\  
                          0.3 &  0.7  
                          \end{pmatrix} \quad j=1,\ldots,r $;
\item initial probability vector $\b \pi = \begin{pmatrix} 0.5 \\ 0.5 \end{pmatrix}$;
\item transition probability matrix $\b \Pi = \begin{pmatrix} 0.9 & 0.1 \\
															   0.1 & 0.9 
															   \end{pmatrix} $.
\end{itemize}
All the algorithms have been implemented by means of a series of {\sc R}  functions with the recursions implemented in
Fortran (gfortran compiler) and were run on a Intel Core i5 processor with 2.5 GHz.  

In Table \ref{tab:est_Phi} we report the estimation results of the conditional response probabilities 
obtained by the FML method and the 3S estimation method.  
The table only shows the scenario in which $r=5$, because for $r=10,20,50$ we obtained similar results.

\begin{table}[ht]\centering\vspace*{0.25cm}
\small{
\begin{tabular}{rrr|rr|rr}
  \toprule
  \multicolumn7c{FML}\\
      & \multicolumn2c{bias($\hat{\phi}_{j1|u}$)}& \multicolumn2c{se($\hat{\phi}_{j1|u}$)}& \multicolumn2c{rmse($\hat{\phi}_{j1|u}$)} \\   
\midrule
\multicolumn1c{$j$}  & \multicolumn1c{$u=1$} & \multicolumn1{c|}{$u=2$} & \multicolumn1c{$u=1$} & \multicolumn1{c|}{$u=2$}& \multicolumn1c{$u=1$} & \multicolumn1c{$u=2$} \\ 
\midrule
  1 & 0.0011 & -0.0017 & 0.0149 & 0.0154 & 0.0149 & 0.0155 \\ 
 2 & 0.0008 & -0.0031 & 0.0152 & 0.0155 & 0.0152 & 0.0158 \\ 
 3 & -0.0030 & -0.0015 & 0.0154 & 0.0146 & 0.0157 & 0.0147 \\ 
  4 & -0.0002 & 0.0005 & 0.0148 & 0.0145 & 0.0148 & 0.0145 \\ 
  5 & -0.0014 & 0.0012 & 0.0148 & 0.0142 & 0.0148 & 0.0143 \\ 
\toprule
    \multicolumn7c{3S and 3S-IMP}\\
     &  \multicolumn2c{bias($\hat{\phi}_{j1|u}$)}& \multicolumn2c{se($\hat{\phi}_{j1|u}$)}& \multicolumn2c{rmse($\hat{\phi}_{j1|u}$)} \\   
\midrule
\multicolumn1c{$j$}  & \multicolumn1c{$u=1$} & \multicolumn1{c|}{$u=2$} & \multicolumn1c{$u=1$} & \multicolumn1{c|}{$u=2$} & \multicolumn1c{$u=1$} & \multicolumn1c{$u=2$} \\ 
\midrule
 1   & -0.0020 & -0.0042 & 0.0201 & 0.0211 & 0.0202 & 0.0216 \\ 
 2   & -0.0030 & -0.0051 & 0.0203 & 0.0206 & 0.0205 & 0.0213 \\ 
 3   & -0.0071 & -0.0034 & 0.0204 & 0.0200 & 0.0216 & 0.0203 \\ 
 4  & -0.0024 & -0.0032 & 0.0213 & 0.0197 & 0.0215 & 0.0200 \\ 
 5  & -0.0038 & -0.0023 & 0.0202 & 0.0191 & 0.0205 & 0.0193 \\ 
\bottomrule
\end{tabular}\vspace*{0.2cm}
\caption[Text excluding the matrix]{\em Bias, standard error (se) and root mean square error (rmse) of the full maximum likelihood and the three-step estimators of the conditional response probabilities $\phi_{j1|u}$ under Scenario 1, with $r=5$ and $\b \Phi_j=\begin{pmatrix}  0.7  & 0.3    \\  0.3 &  0.7  \end{pmatrix} $}\label{tab:est_Phi}} 
\end{table}
These results are in agreement with Proposition \ref{prop:1}, according to which the conditional response probabilities are consistently estimated
by the proposed  approach, even if some loss of efficiency is observed. 

Table \ref{tab:est_pi} concerns estimators FML, 3S and 3S-IMP of the initial probabilities, whereas 
in Table \ref{tab:est_PI} we display the estimation results of the transition probabilities obtained by
the methods under comparison.  

\begin{table}[ht]\centering\vspace*{0.25cm}
\small{
\begin{tabular}{rr|r|r}
\toprule
 \multicolumn4c{FML}\\
\multicolumn1{c}{$r$} & \multicolumn1{c}{bias($\hat{\pi}_1$)} & \multicolumn1{c}{se($\hat{\pi}_1$)} & \multicolumn1{c}{rmse($\hat{\pi}_1$)}   \\
\midrule
 5 & 0.0025 & 0.0289 & 0.0290 \\  
10 & 0.0013 & 0.0239 & 0.0239 \\ 
20 & 0.0011 & 0.0218 & 0.0219 \\ 
50 &0.0014 & 0.0245 & 0.0245 \\ 
\toprule
   \multicolumn4{c}{3S} \\
\multicolumn1{c}{$r$}  & \multicolumn1{c}{bias($\tilde{\pi}_1$)} & \multicolumn1{c}{se($\tilde{\pi}_1$)} & \multicolumn1{c}{rmse($\tilde{\pi}_1$)}\\
\midrule 
5 & 0.0048 & 0.0328 & 0.0332 \\ 
10 & 0.0006 & 0.0234 & 0.0234 \\
 20 & 0.0015 & 0.0213 & 0.0213 \\
 50 & 0.0015 & 0.0245 & 0.0245 \\ 
\toprule
  \multicolumn4{c}{3S-IMP} \\
& \multicolumn1{c}{bias($\tilde{\pi}_1$)} & \multicolumn1{c}{se($\tilde{\pi}_1$)} & \multicolumn1{c}{rmse($\tilde{\pi}_1$)}  \\
\midrule
5& 0.0029 & 0.0367 & 0.0368 \\ 
10& 0.0008 & 0.0259 & 0.0260 \\
20& 0.0016 & 0.0220 & 0.0220 \\ 
50& 0.0015 & 0.0245 & 0.0246 \\
  \bottomrule
\end{tabular}\vspace*{0.2cm}
\caption[Text excluding the matrix]{\em Bias, standard error (se) and root mean square error (rmse) of the full maximum likelihood and the three-step (standard and improved) estimators of the initial probabilities $\pi_1$ under Scenario 1 with $\b \pi = \begin{pmatrix} 0.5 \\ 0.5 \end{pmatrix}$ }\label{tab:est_pi}} 
\end{table}
\begin{table}[ht]\centering\vspace*{0.25cm}
\small{
\begin{tabular}{cccr|rr|rr}
\toprule
 \multicolumn8c{FML}\\
  && \multicolumn2{c}{bias($\hat{\pi}_{v|u}$)} & \multicolumn2{c}{se($\hat{\pi}_{v|u}$)}  & \multicolumn2{c}{rmse($\hat{\pi}_{v|u}$)}  \\
\midrule
\multicolumn1{c}{$r$} &\multicolumn1{c}{$u$}	& $v= 1$	&	$v=2$	& 	$v=1$	&	$v=2$	& 	$v=1$	&	$v=2$\\
\midrule
 \multirow2*{5} &1	&0.0007 & -0.0007 & 0.0156 & 0.0156 & 0.0156 & 0.0156 \\ 
 & 2 & -0.0014 & 0.0014 & 0.0147 & 0.0147 & 0.0147 & 0.0147 \\ 
 & & & & & & &  \\								
 \multirow2*{10}&1	&-0.0013 & 0.0013 & 0.0125 & 0.0125 & 0.0126 & 0.0126 \\ 
  &2 & 0.0006 & -0.0006 & 0.0125 & 0.0125 & 0.0125 & 0.0125 \\ 
 & & & & & & &  \\										
 \multirow2*{20}&1	& -0.0013 & 0.0013 & 0.0090 & 0.0090 & 0.0091 & 0.0091 \\ 
  &2 & 0.0013 & -0.0013 & 0.0108 & 0.0108 & 0.0109 & 0.0109 \\ 
 & & & & & & &  \\										
 \multirow2*{50}&1	&-0.0001 & 0.0001 & 0.0084 & 0.0084 & 0.0084 & 0.0084 \\ 
  &2 & 0.0002 & -0.0002 & 0.0102 & 0.0102 & 0.0102 & 0.0102 \\ 
\toprule
\multicolumn8c{3S}\\
  && \multicolumn2{c}{bias($\tilde{\pi}_{v|u}$)} & \multicolumn2{c}{se($\tilde{\pi}_{v|u}$)}  & \multicolumn2{c}{rmse($\tilde{\pi}_{v|u}$)}  \\
\midrule
\multicolumn1{c}{$r$} &\multicolumn1{c}{$u$}	& $v= 1$	&	$v=2$	& 	$v=1$	&	$v=2$	& 	$v=1$	&	$v=2$\\
\midrule
 \multirow2*{5} &1	&  -0.2997 & 0.2997 & 0.0273 & 0.0273 & 0.3009 & 0.3009 \\ 
 & 2 & 0.2881 & -0.2881 & 0.0269 & 0.0269 & 0.2894 & 0.2894 \\ 
 & & & & & & &  \\											
 \multirow2*{10}&1	& -0.1872 & 0.1872 & 0.0143 & 0.0143 & 0.1877 & 0.1877 \\ 
  &2 & 0.1875 & -0.1875 & 0.0146 & 0.0146 & 0.1880 & 0.1880 \\ 
 & & & & & & &  \\												
 \multirow2*{20}&1	& -0.0717 & 0.0717 & 0.0098 & 0.0098 & 0.0724 & 0.0724 \\ 
  &2 & 0.0714 & -0.0714 & 0.0118 & 0.0118 & 0.0724 & 0.0724 \\ 
 & & & & & & &  \\										
 \multirow2*{50}&1	&-0.0040 & 0.0040 & 0.0086 & 0.0086 & 0.0094 & 0.0094 \\ 
  &2 & 0.0041 & -0.0041 & 0.0103 & 0.0103 & 0.0111 & 0.0111 \\ 
 \toprule
 \multicolumn8c{3S-IMP}\\
  && \multicolumn2{c}{bias($\tilde{\pi}_{v|u}$)} & \multicolumn2{c}{se($\tilde{\pi}_{v|u}$)}  & \multicolumn2{c}{rmse($\tilde{\pi}_{v|u}$)}  \\
\midrule
\multicolumn1{c}{$r$} &\multicolumn1{c}{$u$}	& $v= 1$	&	$v=2$	& 	$v=1$	&	$v=2$	& 	$v=1$	&	$v=2$\\
\midrule
 \multirow2*{5} &1	&-0.1771 & 0.1771 & 0.0214 & 0.0214 & 0.1784 & 0.1784 \\ 
  & 2 & 0.1690 & -0.1690 & 0.0222 & 0.0222 & 0.1704 & 0.1704 \\ 
 & & & & & & &  \\											
 \multirow2*{10}&1	&-0.0865 & 0.0865 & 0.0140 & 0.0140 & 0.0877 & 0.0877 \\ 
  &2 & 0.0871 & -0.0871 & 0.0138 & 0.0138 & 0.0882 & 0.0882 \\ 
 & & & & & & &  \\												
 \multirow2*{20}&1	& -0.0278 & 0.0278 & 0.0094 & 0.0094 & 0.0293 & 0.0293 \\ 
  &2 & 0.0277 & -0.0277 & 0.0113 & 0.0113 & 0.0299 & 0.0299 \\ 
 & & & & & & &  \\												
 \multirow2*{50}&1	&-0.0014 & 0.0014 & 0.0085 & 0.0085 & 0.0086 & 0.0086 \\ 
  & 2 & 0.0016 & -0.0016 & 0.0102 & 0.0102 & 0.0103 & 0.0103 \\ 
 \bottomrule
\end{tabular}\vspace*{0.2cm}
\caption[Text excluding the matrix]{ \em Bias, standard error (se) and root mean square error (rmse) of the full maximum likelihood and the three-step (standard and improved) estimators of the transition probabilities $\pi_{v|u}$ under Scenario 1 with  $\b \Pi = \begin{pmatrix} 0.9 & 0.1 \\
															   0.1 & 0.9 
															   \end{pmatrix} $}\label{tab:est_PI}} 
\end{table}

From the results above we observe that the bias of $\tilde{\pi}_u$ is always negligible and its root mean 
square error decreases as the number of response variables increases. Moreover, very small differences are found between the 
two versions (3S and 3S-IMP) of the proposed estimator. Regarding the three-step estimator of the transition probabilities, we observe
that its bias is higher with few response variables for each time occasion, especially in the standard version (3S), but it decreases with $r$. The same 
happens for the root mean square error. The 3S-IMP version has better performance, with a bias of $\tilde{\pi}_{v|u}$ which becomes negligible 
when $r\geq 20$. When $r=50$ the results of both versions are very similar to those obtained by the FML estimator.

In the second scenario defined in our simulation study, we evaluate how a lower persistence of the Markov chain may
influence the performance of the estimation algorithms under comparison. In fact, we consider the same parameters as
in the benchmark design apart from the transition probability matrix, which is let equal to $$\b \Pi = \begin{pmatrix} 0.6 & 0.4 \\
															   0.4 & 0.6 
															   \end{pmatrix} .$$
In this situation, the conditional response probabilities are again consistently estimated by both the FML and the 3S 
methods, with very similar results in terms of bias and root mean square error than Scenario 1. 
Very similar performances with respect to the benchmark design are also obtained by the estimators of the parameter
$\pi_u$. Therefore, we report here only the results about estimation of the transition probabilities.	

From Table \ref{tab:PI2} we observe that, with a lower level of persistence 
of the Markov chain, the behavior of $\tilde{\pi}_{v|u}$ improves. In particular, we note
smaller differences, in terms of bias, standard error, and root mean square error, between the FML and the 3S estimator and also between the 3S and
the 3S-IMP estimator.

\begin{table}[ht]\centering\vspace*{0.25cm}
\small{
\begin{tabular}{cccr|rr|rr}
\toprule
 \multicolumn8c{FML}\\
  && \multicolumn2{c}{bias($\hat{\pi}_{v|u}$)} & \multicolumn2{c}{se($\hat{\pi}_{v|u}$)}  & \multicolumn2{c}{rmse($\hat{\pi}_{v|u}$)}  \\
\midrule
\multicolumn1{c}{$r$} &\multicolumn1{c}{$u$}	& $v= 1$	&	$v=2$	& 	$v=1$	&	$v=2$	& 	$v=1$	&	$v=2$\\
\midrule
 \multirow2*{5} &1	& 0.0030 & -0.0030 & 0.0317 & 0.0317 & 0.0319 & 0.0319 \\ 
 & 2 & 0.0013 & -0.0013 & 0.0376 & 0.0376 & 0.0377 & 0.0377 \\ 
 & & & & & & &  \\										
 \multirow2*{10}&1	&-0.0050 & 0.0050 & 0.0194 & 0.0194 & 0.0201 & 0.0201 \\ 
  & 2 & 0.0010 & -0.0010 & 0.0229 & 0.0229 & 0.0229 & 0.0229 \\ 
 & & & & & & &  \\											
 \multirow2*{20}&1	& 0.0026 & -0.0026 & 0.0167 & 0.0167 & 0.0169 & 0.0169 \\ 
  & 2 & -0.0012 & 0.0012 & 0.0180 & 0.0180 & 0.0181 & 0.0181 \\ 
 & & & & & & &  \\											
 \multirow2*{50}&1	&-0.0007 & 0.0007 & 0.0147 & 0.0147 & 0.0147 & 0.0147 \\ 
  & 2 & -0.0016 & 0.0016 & 0.0147 & 0.0147 & 0.0148 & 0.0148 \\ 
\toprule
 \multicolumn8c{3S}\\
  && \multicolumn2{c}{bias($\tilde{\pi}_{v|u}$)} & \multicolumn2{c}{se($\tilde{\pi}_{v|u}$)}  & \multicolumn2{c}{rmse($\tilde{\pi}_{v|u}$)}  \\
\midrule
\multicolumn1{c}{$r$} &\multicolumn1{c}{$u$}	& $v= 1$	&	$v=2$	& 	$v=1$	&	$v=2$	& 	$v=1$	&	$v=2$\\
\midrule
 \multirow2*{5} &1	&  -0.0741 & 0.0741 & 0.0293 & 0.0293 & 0.0797 & 0.0797 \\ 
&  2 & 0.0724 & -0.0724 & 0.0321 & 0.0321 & 0.0792 & 0.0792 \\  
 & & & & & & &  \\											
 \multirow2*{10}&1	& -0.0495 & 0.0495 & 0.0160 & 0.0160 & 0.0520 & 0.0520 \\ 
  &  2 & 0.0467 & -0.0467 & 0.0180 & 0.0180 & 0.0500 & 0.0500 \\ 
 & & & & & & &  \\												
 \multirow2*{20}&1	& -0.0150 & 0.0150 & 0.0150 & 0.0150 & 0.0212 & 0.0212 \\ 
  &  2 & 0.0166 & -0.0166 & 0.0165 & 0.0165 & 0.0234 & 0.0234 \\  
 & & & & & & &  \\											
 \multirow2*{50}&1	&-0.0017 & 0.0017 & 0.0146 & 0.0146 & 0.0147 & 0.0147 \\ 
  &  2 & -0.0006 & 0.0006 & 0.0146 & 0.0146 & 0.0146 & 0.0146 \\ 
 \toprule
\multicolumn8c{3S-IMP}\\
  && \multicolumn2{c}{bias($\tilde{\pi}_{v|u}$)} & \multicolumn2{c}{se($\tilde{\pi}_{v|u}$)}  & \multicolumn2{c}{rmse($\tilde{\pi}_{v|u}$)}  \\
 \midrule
\multicolumn1{c}{$r$} &\multicolumn1{c}{$u$}	& $v= 1$	&	$v=2$	& 	$v=1$	&	$v=2$	& 	$v=1$	&	$v=2$\\
\midrule
 \multirow2*{5} &1	&-0.0481 & 0.0481 & 0.0297 & 0.0297 & 0.0565 & 0.0565 \\ 
  &  2 & 0.0480 & -0.0480 & 0.0345 & 0.0345 & 0.0591 & 0.0591 \\  
 & & & & & & &  \\									
 \multirow2*{10}&1	&-0.0304 & 0.0304 & 0.0175 & 0.0175 & 0.0351 & 0.0351 \\ 
  &  2 & 0.0274 & -0.0274 & 0.0202 & 0.0202 & 0.0340 & 0.0340 \\ 
 & & & & & & &  \\											
 \multirow2*{20}&1	& -0.0066 & 0.0066 & 0.0159 & 0.0159 & 0.0172 & 0.0172 \\ 
  &  2 & 0.0080 & -0.0080 & 0.0173 & 0.0173 & 0.0191 & 0.0191 \\ 
 & & & & & & &  \\											
 \multirow2*{50}&1	&-0.0012 & 0.0012 & 0.0147 & 0.0147 & 0.0147 & 0.0147 \\ 
  &  2 & -0.0011 & 0.0011 & 0.0147 & 0.0147 & 0.0147 & 0.0147 \\ 
   \bottomrule
\end{tabular}\vspace*{0.2cm}
\caption[Text excluding the matrix]{\em Bias, standard error (se) and root mean square error (rmse) of the full maximum likelihood and the three-step (standard and improved) estimators of the transition probabilities $\pi_{v|u}$ under Scenario 2 with $\b \Pi = \begin{pmatrix} 0.6 & 0.4 \\
															   0.4 & 0.6 
															   \end{pmatrix} $ }\label{tab:PI2}} 
\end{table}

Under Scenario 3 we consider more separated latent states, with a parameter setting
equal to the benchmark design, apart from the conditional response probability matrix, that is,
\[\b \Phi_j = \begin{pmatrix} 
                          0.9  & 0.1    \\  
                          0.1 &  0.9  
                          \end{pmatrix}, \quad j=1,\ldots,r.\]
In this context, all the estimators improve their performances, leading again to consistent estimates of the conditional
response probabilities $\phi_{jy|u}$; we also observe an improvement in estimating the initial probabilities $\pi_u$. Focusing only on the transition probabilities, whose estimation is the most challenging, from Table \ref{tab:PI3} we observe a great improvement of the 3S
approach, with negligible differences in terms of bias and root mean square error of $\tilde{\pi}_{v|u}$, with respect
to the FML method and between the two versions of the three-step method, especially when $r\geq 10$.

\begin{table}[ht]\centering\vspace*{0.25cm}
\small{
\begin{tabular}{cccr|rr|rr}
\toprule
 \multicolumn8c{FML}\\
  && \multicolumn2{c}{bias($\hat{\pi}_{v|u}$)} & \multicolumn2{c}{se($\hat{\pi}_{v|u}$)}  & \multicolumn2{c}{rmse($\hat{\pi}_{v|u}$)}  \\
\midrule
\multicolumn1{c}{$r$} &\multicolumn1{c}{$u$}	& $v= 1$	&	$v=2$	& 	$v=1$	&	$v=2$	& 	$v=1$	&	$v=2$\\
\midrule
 \multirow2	*{5} &1	&  -0.0023 & 0.0023 & 0.0106 & 0.0106 & 0.0109 & 0.0109 \\ 
  &2 & 0.0010 & -0.0010 & 0.0101 & 0.0101 & 0.0101 & 0.0101 \\ 
 & & & & & & &  \\									
 \multirow2*{10}&1	&-0.0019 & 0.0019 & 0.0102 & 0.0102 & 0.0104 & 0.0104 \\ 
  &2  & -0.0002 & 0.0002 & 0.0097 & 0.0097 & 0.0097 & 0.0097 \\  
 & & & & & & &  \\										
 \multirow2*{20}&1	&0.0004 & -0.0004 & 0.0103 & 0.0103 & 0.0103 & 0.0103 \\ 
  &2  & -0.0015 & 0.0015 & 0.0094 & 0.0094 & 0.0095 & 0.0095 \\ 
 & & & & & & &  \\									
 \multirow2*{50}&1	&0.0004 & -0.0004 & 0.0083 & 0.0083 & 0.0084 & 0.0084 \\ 
  &2  & 0.0002 & -0.0002 & 0.0084 & 0.0084 & 0.0084 & 0.0084 \\ 
\toprule
 \multicolumn8c{3S}\\
  && \multicolumn2{c}{bias($\tilde{\pi}_{v|u}$)} & \multicolumn2{c}{se($\tilde{\pi}_{v|u}$)}  & \multicolumn2{c}{rmse($\tilde{\pi}_{v|u}$)}  \\
\midrule
\multicolumn1{c}{$r$} &\multicolumn1{c}{$u$}	& $v= 1$	&	$v=2$	& 	$v=1$	&	$v=2$	& 	$v=1$	&	$v=2$\\
\midrule
 \multirow2*{5} &1	& -0.0265 & 0.0265 & 0.0108 & 0.0108 & 0.0286 & 0.0286 \\ 
 & 2 & 0.0254 & -0.0254 & 0.0102 & 0.0102 & 0.0274 & 0.0274 \\ 
 & & & & & & &  \\									
 \multirow2*{10}&1	& -0.0035 & 0.0035 & 0.0104 & 0.0104 & 0.0109 & 0.0109 \\ 
  & 2 & 0.0014 & -0.0014 & 0.0097 & 0.0097 & 0.0098 & 0.0098 \\ 
 & & & & & & &  \\										
 \multirow2*{20}&1	& 0.0004 & -0.0004 & 0.0103 & 0.0103 & 0.0104 & 0.0104 \\ 
  & 2 & -0.0015 & 0.0015 & 0.0094 & 0.0094 & 0.0095 & 0.0095 \\ 
 & & & & & & &  \\										
 \multirow2*{50}&1	&0.0004 & -0.0004 & 0.0083 & 0.0083 & 0.0084 & 0.0084 \\ 
  & 2 & 0.0002 & -0.0002 & 0.0084 & 0.0084 & 0.0084 & 0.0084 \\
 \toprule
 \multicolumn8c{3S-IMP}\\
  && \multicolumn2{c}{bias($\tilde{\pi}_{v|u}$)} & \multicolumn2{c}{se($\tilde{\pi}_{v|u}$)}  & \multicolumn2{c}{rmse($\tilde{\pi}_{v|u}$)}  \\
\midrule
\multicolumn1{c}{$r$} &\multicolumn1{c}{$u$}	& $v= 1$	&	$v=2$	& 	$v=1$	&	$v=2$	& 	$v=1$	&	$v=2$\\
\midrule
 \multirow2*{5} &1	&-0.0116 & 0.0116 & 0.0108 & 0.0108 & 0.0158 & 0.0158 \\ 
 & 2 & 0.0104 & -0.0104 & 0.0101 & 0.0101 & 0.0145 & 0.0145 \\ 
 & & & & & & &  \\								
 \multirow2*{10}&1	&-0.0024 & 0.0024 & 0.0102 & 0.0102 & 0.0105 & 0.0105 \\ 
  & 2 & 0.0003 & -0.0003 & 0.0097 & 0.0097 & 0.0097 & 0.0097 \\ 
 & & & & & & &  \\										
 \multirow2*{20}&1	&0.0004 & -0.0004 & 0.0103 & 0.0103 & 0.0104 & 0.0104 \\ 
& 2 & -0.0015 & 0.0015 & 0.0094 & 0.0094 & 0.0095 & 0.0095 \\
 & & & & & & &  \\											
 \multirow2*{50}&1	&0.0004 & -0.0004 & 0.0083 & 0.0083 & 0.0084 & 0.0084 \\ 
  & 2 & 0.0002 & -0.0002 & 0.0084 & 0.0084 & 0.0084 & 0.0084 \\ 
   \bottomrule
\end{tabular}\vspace*{0.2cm}
\caption[Text excluding the matrix]{\em Bias, standard error (se) and root mean square error (rmse) of the full maximum likelihood and the three-step (standard and improved) estimators of the transition probabilities $\pi_{v|u}$ under Scenario 3 with $\b \Pi = \begin{pmatrix} 0.9 & 0.1 \\
															   0.1 & 0.9 
															   \end{pmatrix}$ }\label{tab:PI3}} 
\end{table}

The fourth scenario we consider for the basic LM model allows us to evaluate 
the influence of a larger number of latent states. We then apply the following setting:
\begin{itemize} 
\item number of latent states: $k=3$;
\item conditional response probability matrix: $\b \Phi_j =  \begin{pmatrix}
0.7 &  0.3 & 0.5 \\
0.3 & 0.7 & 0.5
\end{pmatrix},\quad j=1,\ldots,r$;
\item initial probability vector: $\b\pi=\begin{pmatrix}1/3\cr 1/3 \cr 1/3\end{pmatrix}$;
\item transition probability matrix: $ \b \Pi =  \begin{pmatrix}
0.6 &  0.2  & 0.2\\
0.2 & 0.6 & 0.2\\
0.2 & 0.2 & 0.6
\end{pmatrix}$.
\end{itemize}
The remaining parameters are left unchanged with respect to the benchmark design.

In such a context the performances of all estimators get worse, with higher bias and root mean square error;
see Table \ref{tab:PI4} which reports the results regarding the estimation of $\pi_{v|u}$.
As in the previous scenarios, the performance of the proposed estimator improves as the number of response variables increases.

\begin{table}[ht]\centering\vspace*{0.25cm}
\small{
\begin{tabular}{ccrrr|rrr|rrr}
\toprule
 \multicolumn{11}c{FML}\\
  && \multicolumn3{c}{bias($\hat{\pi}_{v|u}$)} & \multicolumn3{c}{se($\hat{\pi}_{v|u}$)}  & \multicolumn3{c}{rmse($\hat{\pi}_{v|u}$)}  \\
\midrule
\multicolumn1{c}{$r$} &\multicolumn1{c}{$u$}	& $v= 1$	&	$v=2$	& $v= 3$&	$v=1$	&	$v=2$ & $v= 3$	& 	$v=1$	&	$v=2$& $v= 3$\\
\midrule
 \multirow3*{5} &1	&  -0.080 & 0.044 & 0.036 & 0.257 & 0.240 & 0.156 & 0.270 & 0.244 & 0.160 \\ 
 & 2 & -0.060 & 0.033 & 0.026 & 0.151 & 0.211 & 0.227 & 0.163 & 0.214 & 0.229 \\ 
  &3 & -0.004 & 0.041 & -0.037 & 0.151 & 0.163 & 0.174 & 0.151 & 0.168 & 0.178 \\ 
 & & & & & & & & & & \\									
 \multirow3*{10}&1	&-0.095 & 0.053 & 0.042 & 0.247 & 0.171 & 0.149 & 0.265 & 0.179 & 0.155 \\ 
 & 2 & 0.040 & -0.114 & 0.074 & 0.181 & 0.245 & 0.216 & 0.186 & 0.270 & 0.228 \\ 
  & 3 & 0.017 & 0.045 & -0.062 & 0.154 & 0.203 & 0.214 & 0.155 & 0.208 & 0.222 \\ 
 & & & & & & & & & & \\										
% \multirow3*{20}&1	&-0.003 & -0.005 & 0.009 & 0.041 & 0.057 & 0.033 & 0.041 & 0.057 & 0.034 \\ 
%  &2 & -0.010 & 0.013 & -0.003 & 0.044 & 0.055 & 0.045 & 0.045 & 0.056 & 0.045 \\ 
%  & 3 & -0.005 & 0.000 & 0.005 & 0.043 & 0.053 & 0.037 & 0.043 & 0.053 & 0.037 \\ 
%  \midrule											
 \multirow3*{50}&1	&  -0.002 & 0.003 & -0.001 & 0.023 & 0.021 & 0.017 & 0.023 & 0.021 & 0.017 \\ 
  & 2 & 0.001 & 0.001 & -0.002 & 0.021 & 0.028 & 0.016 & 0.021 & 0.028 & 0.017 \\ 
  & 3 & -0.003 & 0.004 & -0.001 & 0.016 & 0.019 & 0.022 & 0.017 & 0.019 & 0.022 \\ 
\toprule
\multicolumn{11}c{3S}\\
  && \multicolumn3{c}{bias($\tilde{\pi}_{v|u}$)} & \multicolumn3{c}{se($\tilde{\pi}_{v|u}$)}  & \multicolumn3{c}{rmse($\tilde{\pi}_{v|u}$)}  \\
  \midrule
\multicolumn1{c}{$r$} &\multicolumn1{c}{$u$}	& $v= 1$	&	$v=2$	& $v= 3$&	$v=1$	&	$v=2$ & $v= 3$	& 	$v=1$	&	$v=2$& $v= 3$\\
\midrule
 \multirow3*{5} &1	& -0.275 & 0.199 & 0.076 & 0.131 & 0.138 & 0.143 & 0.305 & 0.242 & 0.162 \\ 
 & 2 & 0.094 & -0.185 & 0.091 & 0.116 & 0.142 & 0.142 & 0.149 & 0.233 & 0.168 \\ 
  &3 & 0.076 & 0.207 & -0.283 & 0.117 & 0.150 & 0.157 & 0.140 & 0.255 & 0.324 \\ 
 & & & & & & & & & & \\									
 \multirow3*{10}&1	& -0.178 & 0.095 & 0.083 & 0.115 & 0.114 & 0.098 & 0.212 & 0.148 & 0.129 \\ 
  &2 & 0.155 & -0.288 & 0.134 & 0.085 & 0.116 & 0.096 & 0.176 & 0.311 & 0.165 \\ 
  &3 & 0.110 & 0.099 & -0.208 & 0.103 & 0.117 & 0.127 & 0.150 & 0.153 & 0.244 \\ 
 & & & & & & & & & & \\									
% \multirow3*{20}&1	& -0.181 & 0.143 & 0.038 & 0.065 & 0.042 & 0.051 & 0.192 & 0.148 & 0.064 \\ 
%  &2 & 0.097 & -0.222 & 0.124 & 0.036 & 0.045 & 0.048 & 0.103 & 0.226 & 0.133 \\ 
%  &3 & 0.013 & 0.141 & -0.153 & 0.039 & 0.052 & 0.075 & 0.041 & 0.150 & 0.171 \\ 
%  \midrule											
 \multirow3*{50}&1	&-0.077 & 0.071 & 0.006 & 0.020 & 0.016 & 0.014 & 0.080 & 0.073 & 0.015 \\ 
  &2& 0.075 & -0.145 & 0.070 & 0.016 & 0.025 & 0.014 & 0.076 & 0.147 & 0.072 \\ 
  &3 & 0.003 & 0.073 & -0.076 & 0.013 & 0.017 & 0.020 & 0.013 & 0.075 & 0.079 \\
 \toprule
\multicolumn{11}c{3S-IMP}\\
  && \multicolumn3{c}{bias($\tilde{\pi}_{v|u}$)} & \multicolumn3{c}{se($\tilde{\pi}_{v|u}$)}  & \multicolumn3{c}{rmse($\tilde{\pi}_{v|u}$)}  \\
  \midrule
\multicolumn1{c}{$r$} &\multicolumn1{c}{$u$}	& $v= 1$	&	$v=2$	& $v= 3$&	$v=1$	&	$v=2$ & $v= 3$	& 	$v=1$	&	$v=2$& $v= 3$\\
\midrule
 \multirow3*{5} &1	& -0.236 & 0.183 & 0.053 & 0.142 & 0.126 & 0.133 & 0.275 & 0.222 & 0.143 \\ 
 & 2 & 0.088 & -0.172 & 0.083 & 0.105 & 0.140 & 0.130 & 0.137 & 0.222 & 0.155 \\ 
 & 3 & 0.045 & 0.205 & -0.249 & 0.115 & 0.164 & 0.167 & 0.123 & 0.262 & 0.300 \\ 
 & & & & & & & & & & \\								
 \multirow3*{10}&1	&-0.135 & 0.092 & 0.043 & 0.120 & 0.120 & 0.087 & 0.181 & 0.151 & 0.097 \\ 
 & 2 & 0.142 & -0.270 & 0.128 & 0.074 & 0.128 & 0.092 & 0.160 & 0.299 & 0.157 \\ 
 & 3 & 0.075 & 0.086 & -0.160 & 0.094 & 0.119 & 0.139 & 0.120 & 0.146 & 0.212 \\ 
 & & & & & & & & & & \\										
% \multirow3*{20}&1	&-0.125 & 0.112 & 0.013 & 0.059 & 0.043 & 0.049 & 0.139 & 0.120 & 0.050 \\ 
%  &2 & 0.073 & -0.174 & 0.100 & 0.032 & 0.041 & 0.043 & 0.080 & 0.179 & 0.109 \\ 
%  &3 & -0.014 & 0.119 & -0.105 & 0.036 & 0.052 & 0.073 & 0.039 & 0.130 & 0.128 \\ 
%\midrule											
 \multirow3*{50}&1	&-0.042 & 0.042 & 0.001 & 0.022 & 0.020 & 0.015 & 0.048 & 0.046 & 0.015 \\ 
  & 2 & 0.043 & -0.083 & 0.040 & 0.019 & 0.028 & 0.015 & 0.047 & 0.088 & 0.043 \\ 
  & 3 & -0.003 & 0.045 & -0.042 & 0.014 & 0.019 & 0.021 & 0.014 & 0.049 & 0.047 \\
   \bottomrule
\end{tabular}\vspace*{0.2cm}
\caption[Text excluding the matrix]{\em Bias, standard error (se) and root mean square error (rmse) of the full maximum likelihood and the three-step (standard and improved) estimators of the transition probabilities $\pi_{v|u}$ under Scenario 4 with  $ \b \Pi =  \begin{pmatrix}
0.6 &  0.2  & 0.2\\
0.2 & 0.6 & 0.2\\
0.2 & 0.2 & 0.6
\end{pmatrix}$}\label{tab:PI4}} 
\end{table}

We finally assess the influence of the sample size $n$ and of the number of time occasions $T$ on 
the estimation results by considering the same parameters setting as in Scenario 1 apart from $n=1000$ and
$T=8$. When the sample size and/or the number of time occasion increase, we observe an expected improvement of all estimators;
for example, the standard error of the estimators decreases, in general, with $\sqrt{n}$. We do not report here these results for reason of space. 

In comparing the estimation approaches, it is relevant to take into account the computing time required
to run the algorithms of interest. 
%Table \ref{tab:time}
%
In particular, we observe that under the benchmark design (Scenario 1) the computing time required by the 3S approach is,
in all samples, significantly lower than that required by the FML approach. On the other 
hand, the computational cost of the 3S-IMP version is higher, especially with
a few response variables. 
The number of iterations required by the 3S-IMP version to reach the convergence goes, in average, from 27 (with $r=5$)  to 4 (with $r=50$). 
Under Scenario 2, we observe an increase of 
the computational cost of the FML estimation algorithm. % which varies between 2 and 6 seconds, on the basis of the
%number of response variables. 
On the other hand, the computing time required by the two versions of the three-step algorithm is similar to that of Scenario 1, with a very similar number of iterations to reach the convergence
(from 21 with $r=5$, to 4 with $r=50$).
The third scenario shows a strong decrease of the computing time for all the algorithms,
together with a reduction of the number of iterations to converge required by the 3S-IMP algorithm, 
which goes from 6 (with $r=5$) to 1 (with $r=50$). In this context the computational cost of the 3S algorithm remains
always significantly lower than that of the FML algorithm, whereas, for the 3S-IMP algorithm, this happens only for $r \geq 20$.
Finally, when the number of latent states increases (Scenario 4), the computing time required by the FML algorithm 
becomes significantly higher. We also observe a strong increase of the number of 
iterations required by the 3S-IMP approach to reach the convergence, which goes from 95 (with $r=5$) to 
15 (with $r=50$).

\subsection{LM model with individual covariates}\label{sec:5cov}

For the extended LM model with individual covariates illustrated in Section \ref{sec:lmcov}, the benchmark design, which
corresponds to Scenario 1 of the basic LM model, is based on the following setting:
\begin{itemize}
\item sample size: $n=500$;
\item number of time occasions: $T=5$;
\item number of response variables: $r=5,10,20,50$;
\item number of categories of the response variables: $c_j=2$, $j=1,\ldots,r$;
\item number of latent states: $k=2$;
\item number of covariates affecting the initial probabilities: $q_1 = 2$;
\item number of covariates affecting the transition probabilities:  $q_2 = 2$; 
\item  conditional response probability matrix: 
$\b \Phi_j = \begin{pmatrix} 
                          0.7  & 0.3    \\  
                          0.3 &  0.7  
                          \end{pmatrix} \quad j=1,\ldots,r $;
\item $\b \beta = \begin{pmatrix}
					0 & 0.5 & 1
				\end{pmatrix}\tr$;         
\item $\b \Gamma =\begin{pmatrix}
					\log(0.1/0.9) & \log(0.1/0.9) \\
					0.5 & 0.5 \\
					1 & 1 
					\end{pmatrix}$. 
\end{itemize}

As for the basic LM model, the estimation results are based on 100 simulated samples; moreover, the individual covariates 
have been generated from an AR(1) process with autoregressive parameters equal to 0.5 and a gaussian noise process with variance 
equal to 1.

Table \ref{tab:Phicov}, in which are reported
the results of the FML and the 3S estimators of the probabilities $\phi_{jy|u}$ for $r=5$,
confirms that the proposed approach leads again to consistent estimates of the conditional response probabilities.

\begin{table}[ht]\centering\vspace*{0.25cm}
\small{
\begin{tabular}{rrr|rr|rr}
  \toprule
 \multicolumn7c{FML} \\
      & \multicolumn2c{Bias($\hat{\phi}_{j1|u}$)}& \multicolumn2c{se($\hat{\phi}_{j1|u}$)}& \multicolumn2c{rmse($\hat{\phi}_{j1|u}$)} \\   
\midrule
$j$ & $u=1$ & $u=2$ & $u=1$ & $u=2$& $u=1$ & $u=2$ \\ 
\midrule	
 1 & 0.0004 & 0.0018 & 0.0150 & 0.0158 & 0.0150 & 0.0159 \\ 
  2 & 0.0003 & -0.0001 & 0.0182 & 0.0144 & 0.0182 & 0.0144 \\ 
 3 & 0.0003 & 0.0014 & 0.0156 & 0.0174 & 0.0156 & 0.0174 \\ 
 4  & -0.0003 & 0.0011 & 0.0138 & 0.0152 & 0.0138 & 0.0153 \\ 
 5 & -0.0009 & 0.0009 & 0.0159 & 0.0184 & 0.0159 & 0.0184 \\ 
  \toprule
   \multicolumn7c{3S and 3S-IMP} \\
      & \multicolumn2c{Bias($\hat{\phi}_{j1|u}$)}& \multicolumn2c{se($\hat{\phi}_{j1|u}$)}& \multicolumn2c{rmse($\hat{\phi}_{j1|u}$)} \\   
\midrule
$j$& $u=1$ & $u=2$ & $u=1$ & $u=2$& $u=1$ & $u=2$ \\ 
 \midrule	
1 & -0.0020 & 0.0022 & 0.0211 & 0.0226 & 0.0212 & 0.0227 \\ 
 2 & -0.0020 & 0.0001 & 0.0229 & 0.0200 & 0.0229 & 0.0200 \\  
3 & -0.0009 & 0.0005 & 0.0222 & 0.0232 & 0.0222 & 0.0232 \\ 
4& -0.0005 & -0.0012 & 0.0213 & 0.0207 & 0.0213 & 0.0208 \\ 
5& -0.0023 & 0.0000 & 0.0205 & 0.0230 & 0.0206 & 0.0230 \\ 
\bottomrule
\end{tabular}\vspace*{0.2cm}
\caption[Text excluding the matrix]{ \em Bias, standard error (se) and root mean square error (rmse) of the full maximum likelihood and the three-step (standard and improved) estimators of the conditional response probabilities $\phi_{j1|u}$ under Scenario 1 with individual covariates, $r=5$ and $\b \Phi_j = \begin{pmatrix} 
                          0.7  & 0.3    \\  
                          0.3 &  0.7  
                          \end{pmatrix}$ }\label{tab:Phicov}} 
\end{table}

Table \ref{tab:becov} shows the estimation results of parameters in $\b \beta$
obtained by the two methods. From the table, we note that the 3S estimator has a low bias when $r \geq 20$. Moreover, we observe  a much lower bias of the 3S-IMP estimator, whose behavior is similar to that of the FML estimator.
The standard error and the root mean square error decrease as the number of response variables increases. 

\begin{table}[ht]\centering\vspace*{0.25cm}
\small{
\begin{tabular}{cr|r|r}
\toprule
 \multicolumn4{c}{FML} \\
$r$ & \multicolumn1c{bias($\hat{\b \beta}$)} & \multicolumn1c{se($\hat{\b \beta}$)} & \multicolumn1c{rmse($\hat{\b \beta}$)} \\ 
\midrule
 \multirow3*{5}  & 0.0287 & 0.1386 & 0.1415 \\ 
& 0.0096 & 0.1448 & 0.1451 \\ 
& -0.0004 & 0.1610 & 0.1610 \\ 
&&&\\
\multirow3*{10 } & 0.0077 & 0.1197 & 0.1199 \\ 
 & -0.0106 & 0.1246 & 0.1250 \\ 
  & 0.0055 & 0.1266 & 0.1267 \\ 
&&&\\
\multirow3*{20 } & -0.0167 & 0.1029 & 0.1042 \\ 
  & -0.0147 & 0.1137 & 0.1147 \\ 
 & 0.0266 & 0.1247 & 0.1275 \\ 
&&&\\
\multirow3*{50 } & 0.0045 & 0.0938 & 0.0939 \\ 
 & 0.0129 & 0.1049 & 0.1056 \\ 
 & -0.0092 & 0.1185 & 0.1189 \\ 
\toprule
 \multicolumn4{c}{3S} \\
$r$ & \multicolumn1c{bias($\tilde{\b \beta}$)} & \multicolumn1c{se($\tilde{\b \beta}$)} & \multicolumn1c{rmse($\tilde{\b \beta}$)} \\ 
\midrule
\multirow3*{5} & 0.0006 & 0.1689 & 0.1689\\ 
    & -0.2734 & 0.0687 & 0.2819 \\
   & -0.5714 & 0.0641 & 0.5750\\ 
&&&\\
\multirow3*{10 }& -0.0156 & 0.1100 & 0.1111\\ 
   & -0.1818 & 0.0858 & 0.2011\\ 
   & -0.3453 & 0.0820 & 0.3549\\ 
&&&\\
\multirow3*{20 } & -0.0186 & 0.0928 & 0.0947\\ 
   & -0.0778 & 0.1008 & 0.1273\\  
   & -0.1065 & 0.1065 & 0.1507 \\  
&&&\\
 \multirow3*{50 } & 0.0045 & 0.0937 & 0.0938\\  
  & 0.0091 & 0.1035 & 0.1039\\  
 & -0.0181 & 0.1168 & 0.1182\\ 
\toprule
 \multicolumn4{c}{3S-IMP} \\
$r$ & \multicolumn1c{bias($\tilde{\b \beta}$)} & \multicolumn1c{se($\tilde{\b \beta}$)} & \multicolumn1c{rmse($\tilde{\b \beta}$)} \\ 
\midrule
\multirow3*{5} & 0.0438 & 0.2018 & 0.2065 \\ 
  & 0.0191 & 0.1594 & 0.1605 \\ 
& -0.0128 & 0.1737 & 0.1741 \\ 
&&&\\
\multirow3*{10 }& 0.0081 & 0.1331 & 0.1334 \\ 
& -0.0111 & 0.1346 & 0.1350 \\ 
& -0.0004 & 0.1372 & 0.1372 \\ 
&&&\\  
\multirow3*{20 }& -0.0126 & 0.1015 & 0.1023 \\ 
& -0.0147 & 0.1136 & 0.1145 \\
& 0.0288 & 0.1265 & 0.1297 \\
&&&\\
 \multirow3*{50 }& 0.0047 & 0.0942 & 0.0943 \\
 & 0.0133 & 0.1050 & 0.1059 \\
 & -0.0096 & 0.1185 & 0.1189 \\ 
\bottomrule
\end{tabular}\vspace*{0.2cm}
\caption[Text excluding the matrix]{\em Bias, standard error (se) and root mean square error (rmse) of the full maximum likelihood and the three-step (standard and improved) estimators of $\b \beta$ under Scenario 1 with individual covariates and $\b \beta = \begin{pmatrix}
					0 & 0.5 & 1
				\end{pmatrix}\tr$ }\label{tab:becov}
}
\end{table}

Regarding the three-step estimation of parameters in $\b \Gamma$, we observe that the bias (and the root mean square error) may be large with a few response variables for each time occasion 
(see Table \ref{tab:Gacov}), but it decreases with
$r$ and it is negligible for $r=50$. Moreover, we note an 
improvement from the 3S to the 3S-IMP version of the algorithm. 
\begin{table}[ht]\centering\vspace{-2mm}
\small{
\begin{tabular}{crr|rr|rr}
 \toprule
& \multicolumn6c{FML}\\
 $r$& \multicolumn2c{bias($\hat{\bl\Gamma}$)} &  \multicolumn2c{se($\hat{\bl\Gamma}$)} &  \multicolumn2c{rmse($\hat{\bl\Gamma}$)}\\
\midrule
\multirow3*{5}  & -0.0554 & -0.0539 & 0.2323 & 0.2320 & 0.2388 & 0.2382 \\ 
    & 0.0268 & 0.0100 & 0.1545 & 0.1497 & 0.1568 & 0.1501 \\ 
   & 0.0350 & 0.0482 & 0.2018 & 0.1818 & 0.2048 & 0.1881 \\ 
&&&&&&\\
\multirow3*{10}   & -0.0109 & -0.0063 & 0.1652 & 0.1775 & 0.1656 & 0.1777 \\ 
   & 0.0094 & 0.0048 & 0.1173 & 0.1056 & 0.1177 & 0.1057 \\ 
   & 0.0239 & -0.0084 & 0.1330 & 0.1393 & 0.1351 & 0.1396 \\ 
&&&&&&\\
\multirow3*{20}    & -0.0256 & -0.0373 & 0.1403 & 0.1479 & 0.1426 & 0.1526 \\ 
   & 0.0020 & 0.0055 & 0.0857 & 0.0851 & 0.0857 & 0.0853 \\ 
   & 0.0154 & 0.0247 & 0.1091 & 0.1168 & 0.1102 & 0.1194 \\ 
&&&&&&\\
  \multirow3*{50}  &  -0.0145 & -0.0036 & 0.1109 & 0.1387 & 0.1118 & 0.1387 \\ 
   & -0.0018 & 0.0079 & 0.0794 & 0.0962 & 0.0794 & 0.0965 \\ 
   & 0.0148 & -0.0132 & 0.0968 & 0.1030 & 0.0979 & 0.1039 \\ 	
  \toprule	
  & \multicolumn6c{3S}\\
 $r$& \multicolumn2c{bias($\tilde{\bl\Gamma}$)} &  \multicolumn2c{se($\tilde{\bl\Gamma}$)} &  \multicolumn2c{rmse($\tilde{\bl\Gamma}$)}\\
\midrule
\multirow3*{5}  &  1.7805 & 1.8592 & 0.1379 & 0.1369 & 1.7858 & 1.8642 \\ 
   & -0.4244 & -0.4588 & 0.0365 & 0.0369 & 0.4260 & 0.4603 \\ 
   & -0.8548 & -0.9021 & 0.0371 & 0.0355 & 0.8556 & 0.9028 \\ 
&&&&&&\\
\multirow3*{10}   & 1.3428 & 1.4333 & 0.0786 & 0.0742 & 1.3451 & 1.4352 \\ 
   & -0.3462 & -0.3788 & 0.0510 & 0.0433 & 0.3500 & 0.3813 \\ 
   & -0.6885 & -0.7610 & 0.0475 & 0.0524 & 0.6902 & 0.7628 \\ 
&&&&&&\\
\multirow3*{20}    & 0.6996 & 0.7677 & 0.0811 & 0.0865 & 0.7043 & 0.7726 \\ 
   & -0.2023 & -0.2235 & 0.0594 & 0.0576 & 0.2109 & 0.2308 \\ 
   & -0.3990 & -0.4391 & 0.0660 & 0.0707 & 0.4044 & 0.4448 \\ 
&&&&&&\\ 
  \multirow3*{50}  & 0.0453 & 0.0634 & 0.1077 & 0.1318 & 0.1168 & 0.1463 \\ 
 & -0.0219 & -0.0133 & 0.0779 & 0.0929 & 0.0809 & 0.0939 \\ 
  & -0.0203 & -0.0525 & 0.0948 & 0.0983 & 0.0969 & 0.1114 \\ 
   \toprule	
  & \multicolumn6c{3S-IMP}\\
 $r$& \multicolumn2c{bias($\tilde{\bl\Gamma}$)} &  \multicolumn2c{se($\tilde{\bl\Gamma}$)} &  \multicolumn2c{rmse($\tilde{\bl\Gamma}$)}\\
\midrule  
\multirow3*{5}  & 1.2476 & 1.3172 & 0.1288 & 0.1279 & 1.2542 & 1.3234 \\ 
   & -0.3066 & -0.3544 & 0.0801 & 0.0774 & 0.3169 & 0.3628 \\ 
   & -0.6253 & -0.6870 & 0.0911 & 0.0771 & 0.6319 & 0.6913 \\ 
&&&&&&\\
\multirow3*{10}   & 0.7678 & 0.8329 & 0.0985 & 0.0935 & 0.7741 & 0.8382 \\ 
   & -0.2010 & -0.2300 & 0.0859 & 0.0736 & 0.2186 & 0.2415 \\ 
   & -0.3918 & -0.4713 & 0.0830 & 0.0876 & 0.4005 & 0.4793 \\ 
&&&&&&\\
\multirow3*{20}    & 0.2746 & 0.3031 & 0.1148 & 0.1101 & 0.2976 & 0.3224 \\ 
   & -0.0800 & -0.0901 & 0.0724 & 0.0725 & 0.1079 & 0.1156 \\ 
   & -0.1495 & -0.1691 & 0.0946 & 0.0940 & 0.1769 & 0.1935 \\ 
&&&&&&\\
  \multirow3*{50}  & 0.0056 & 0.0186 & 0.1103 & 0.1370 & 0.1105 & 0.1382 \\ 
   & -0.0083 & 0.0009 & 0.0798 & 0.0955 & 0.0802 & 0.0955 \\ 
   & 0.0028 & -0.0261 & 0.0958 & 0.1015 & 0.0958 & 0.1048 \\ 
\bottomrule
\end{tabular}\vspace*{0.2cm}
\caption[Text excluding the matrix]{ \em Bias, standard error (se) and root mean square error (rmse) of the full maximum likelihood and the three-step (standard and improved)  estimators of $\b \Gamma$, under Scenario 1 with individual covariates 
and $\b \Gamma =\begin{pmatrix}
					\log(0.1/0.9) & \log(0.1/0.9) \\
					0.5 & 0.5 \\
					1 & 1 
					\end{pmatrix}$
}\label{tab:Gacov}}
\end{table}
Scenario 2 of the Monte Carlo simulation study is based on a lower persistence of the 
Markov chain, that is obtained by letting the parameters in $\b \Gamma$ to be equal to: 
\[\b \Gamma =\begin{pmatrix}
					\log(0.4/0.6) & \log(0.4/0.6) \\
					0.5 & 0.5 \\
					1 & 1 
					\end{pmatrix}.\]

As for the basic LM model, the estimation results of $\phi_{jy|u}$ and $\b \beta$
do not differ much from those obtained in the benchmark design. We then report here only the results referred to 
the estimation of parameters in $\b \Gamma$; see Table \ref{tab:Gacov2}. In particular, 
when the Markov chain presents a lower persistence, we observe an improvement of the 
performance of the proposed estimator, and smaller differences, in terms of bias and root mean 
square error, with respect to the FML estimation method, 
especially with a large number of response variables. We also observe smaller differences between 
the two versions of the three-step estimation approach, 3S and 3S-IMP. 

\begin{table}[ht]\centering\vspace*{-2mm}
\small{
\begin{tabular}{crr|rr|rr}
 \toprule
 &\multicolumn6c{FML}\\
 $r$& \multicolumn2c{bias($\hat{\bl\Gamma}$)} &  \multicolumn2c{se($\hat{\bl\Gamma}$)} &  \multicolumn2c{rmse($\hat{\bl\Gamma}$)}\\
\midrule
\multirow3*{5}  & -0.0028 & 0.0006 & 0.1351 & 0.1287 & 0.1351 & 0.1287 \\ 
   & 0.0091 & 0.0288 & 0.1232 & 0.1147 & 0.1235 & 0.1183 \\ 
   & 0.0475 & 0.0132 & 0.1563 & 0.1512 & 0.1634 & 0.1518 \\ 
&&&&&&\\
\multirow3*{10}   & 0.0108 & -0.0119 & 0.0764 & 0.0980 & 0.0771 & 0.0987 \\ 
   & -0.0055 & 0.0047 & 0.0840 & 0.0962 & 0.0842 & 0.0963 \\ 
   & 0.0196 & 0.0272 & 0.1066 & 0.1166 & 0.1084 & 0.1198 \\
 &&&&&&\\
 \multirow3*{20}    & 0.0018 & -0.0085 & 0.0825 & 0.0809 & 0.0825 & 0.0814 \\ 
   & -0.0140 & 0.0047 & 0.0803 & 0.0717 & 0.0815 & 0.0718 \\ 
   & 0.0057 & -0.0016 & 0.0872 & 0.0850 & 0.0874 & 0.0850 \\ 
&&&&&&\\
  \multirow3*{50}  &  0.0077 & 0.0069 & 0.0655 & 0.0700 & 0.0659 & 0.0704 \\ 
   & 0.0070 & 0.0041 & 0.0653 & 0.0734 & 0.0656 & 0.0735 \\ 
  & 0.0079 & 0.0091 & 0.0740 & 0.0740 & 0.0745 & 0.0745 \\	
   \toprule
& \multicolumn6c{3S}\\
 $r$& \multicolumn2c{bias($\tilde{\bl\Gamma}$)} &  \multicolumn2c{se($\tilde{\bl\Gamma}$)} &  \multicolumn2c{rmse($\tilde{\bl\Gamma}$)}\\
\midrule  
\multirow3*{5}  &  0.2533 & 0.3936 & 0.1220 & 0.1273 & 0.2811 & 0.4137 \\ 
   & -0.3981 & -0.3984 & 0.0340 & 0.0317 & 0.3995 & 0.3997 \\ 
   & -0.7825 & -0.8156 & 0.0377 & 0.0342 & 0.7834 & 0.8163 \\ 
 &&&&&&\\ 
\multirow3*{10}   &  0.1929 & 0.2696 & 0.0648 & 0.0734 & 0.2035 & 0.2795 \\ 
   & -0.2828 & -0.2926 & 0.0483 & 0.0503 & 0.2869 & 0.2969 \\ 
   & -0.5548 & -0.5750 & 0.0438 & 0.0479 & 0.5565 & 0.5770 \\ 
&&&&&&\\
\multirow3*{20}    & 0.0884 & 0.1144 & 0.0728 & 0.0682 & 0.1145 & 0.1332 \\ 
   & -0.1319 & -0.1293 & 0.0625 & 0.0562 & 0.1459 & 0.1410 \\ 
   & -0.2446 & -0.2685 & 0.0636 & 0.0599 & 0.2528 & 0.2751 \\ 	
&&&&&&\\   
  \multirow3*{50}  & 0.0133 & 0.0142 & 0.0642 & 0.0678 & 0.0655 & 0.0693 \\ 
   & -0.0011 & -0.0047 & 0.0649 & 0.0714 & 0.0649 & 0.0715 \\ 
   & -0.0079 & -0.0071 & 0.0707 & 0.0737 & 0.0712 & 0.0740 \\ 
    \toprule
& \multicolumn6c{3S-IMP}\\
 $r$& \multicolumn2c{bias($\tilde{\bl\Gamma}$)} &  \multicolumn2c{se($\tilde{\bl\Gamma}$)} &  \multicolumn2c{rmse($\tilde{\bl\Gamma}$)}\\
\midrule   
\multirow3*{5}  & 0.1696 & 0.2701 & 0.1240 & 0.1352 & 0.2101 & 0.3020 \\ 
   & -0.2665 & -0.2640 & 0.0691 & 0.0659 & 0.2754 & 0.2721 \\ 
   & -0.5061 & -0.5663 & 0.0832 & 0.0721 & 0.5129 & 0.5709 \\ 
&&&&&&\\
\multirow3*{10}   &0.1164 & 0.1464 & 0.0715 & 0.0853 & 0.1366 & 0.1694 \\ 
   & -0.1625 & -0.1680 & 0.0690 & 0.0740 & 0.1765 & 0.1836 \\ 
   & -0.3028 & -0.3222 & 0.0709 & 0.0779 & 0.3110 & 0.3315 \\ 
 &&&&&&\\  
\multirow3*{20}    & 0.0435 & 0.0455 & 0.0787 & 0.0754 & 0.0899 & 0.0881 \\ 
   & -0.0666 & -0.0584 & 0.0741 & 0.0650 & 0.0996 & 0.0874 \\ 
   & -0.1069 & -0.1251 & 0.0788 & 0.0746 & 0.1327 & 0.1457 \\ 
&&&&&&\\ 
  \multirow3*{50}  & 0.0101 & 0.0098 & 0.0648 & 0.0692 & 0.0656 & 0.0699 \\ 
   & 0.0035 & 0.0003 & 0.0654 & 0.0723 & 0.0655 & 0.0723 \\ 
   & 0.0016 & 0.0022 & 0.0735 & 0.0740 & 0.0735 & 0.0740 \\ 
\bottomrule
\end{tabular}\vspace*{0.2cm}
\caption[Text excluding the matrix]{\em Bias, standard error (se) and root mean square error (rmse) for the full maximum likelihood and the three-step (standard and improved) estimators of $\b \Gamma$, under Scenario 2 with individual covariates and $\b \Gamma =\begin{pmatrix}
					\log(0.4/0.6) & \log(0.4/0.6) \\
					0.5 & 0.5 \\
					1 & 1 
					\end{pmatrix}$ }\label{tab:Gacov2}}
\end{table}

Under the third scenario, we evaluate the performance of the estimation algorithms with more separated
latent states. In particular, the parameters of the simulation study remain the same as in the benchmark design, apart from the 
conditional probability matrix, that is the same as in Scenario 3 of the basic LM model (Section \ref{sec:5basic}). Even in this context,
the conditional response probabilities are consistently estimated by the three-step approach. Moreover, we observe very small differences
between the results obtained by the FML estimator and the 3S and 3S-IMP estimators of parameters in $\b \beta$.
Even the 3S estimator of $\b \Gamma$ is affected by the presence of more separated latent states, with a general
improvement of its performances (Table \ref{tab:Gacov3}), especially with large $r$. 
%In this scenario, there are also negligible differences, in terms of bias
%and root mean square error, between the improved three-step approach and the full likelihood approach. When $r \geq 10$ 
%the standard and improved versions have very similar performances.

\begin{table}[ht]\centering\vspace*{-2mm}
\small{
\begin{tabular}{crr|rr|rr}
 \toprule
  &\multicolumn6c{FML}\\
 $r$& \multicolumn2c{bias($\hat{\bl\Gamma}$)} &  \multicolumn2c{se($\hat{\bl\Gamma}$)} &  \multicolumn2c{rmse($\hat{\bl\Gamma}$)}\\
\midrule
\multirow3*{5}  &  -0.0164 & 0.0019 & 0.1355 & 0.1412 & 0.1365 & 0.1413 \\ 
 & 0.0125 & 0.0120 & 0.0852 & 0.0943 & 0.0861 & 0.0951 \\ 
   & 0.0155 & 0.0014 & 0.1026 & 0.1142 & 0.1038 & 0.1142 \\ 
&&&&&&\\ 
\multirow3*{10}   & -0.0124 & -0.0073 & 0.1120 & 0.1217 & 0.1127 & 0.1220 \\ 
   & -0.0154 & 0.0057 & 0.0811 & 0.1042 & 0.0826 & 0.1044 \\ 
   & 0.0104 & 0.0056 & 0.0979 & 0.0944 & 0.0985 & 0.0945 \\ 
&&&&&&\\ 
\multirow3*{20}    & -0.0094 & 0.0017 & 0.1227 & 0.1314 & 0.1230 & 0.1315 \\ 
   & 0.0052 & -0.0153 & 0.0818 & 0.0876 & 0.0820 & 0.0889 \\ 
   & 0.0058 & 0.0177 & 0.1036 & 0.1048 & 0.1037 & 0.1063 \\
&&&&&&\\
  \multirow3*{50}  & -0.0122 & -0.0162 & 0.1327 & 0.1323 & 0.1333 & 0.1333 \\ 
   & 0.0112 & 0.0056 & 0.0893 & 0.0980 & 0.0900 & 0.0981 \\ 
   & 0.0119 & 0.0004 & 0.1095 & 0.1070 & 0.1101 & 0.1070 \\ 
   \toprule
& \multicolumn6c{3S}\\
 $r$& \multicolumn2c{bias($\tilde{\bl\Gamma}$)} &  \multicolumn2c{se($\tilde{\bl\Gamma}$)} &  \multicolumn2c{rmse($\tilde{\bl\Gamma}$)}\\
\midrule
\multirow3*{5}  &  0.3088 & 0.3578 & 0.1066 & 0.1040 & 0.3267 & 0.3726 \\ 
  & -0.0857 & -0.0973 & 0.0679 & 0.0786 & 0.1093 & 0.1251 \\ 
   & -0.1791 & -0.2084 & 0.0856 & 0.0912 & 0.1985 & 0.2275 \\ 
&&&&&&\\ 
\multirow3*{10}   &  0.0139 & 0.0202 & 0.1096 & 0.1193 & 0.1104 & 0.1210 \\ 
   & -0.0237 & -0.0024 & 0.0797 & 0.1029 & 0.0831 & 0.1030 \\ 
   & -0.0057 & -0.0103 & 0.0950 & 0.0930 & 0.0952 & 0.0936 \\
&&&&&&\\
\multirow3*{20}    & -0.0093 & 0.0018 & 0.1227 & 0.1314 & 0.1231 & 0.1314 \\ 
   & 0.0052 & -0.0153 & 0.0818 & 0.0876 & 0.0820 & 0.0889 \\ 
  & 0.0058 & 0.0177 & 0.1036 & 0.1048 & 0.1038 & 0.1062 \\
&&&&&&\\ 
 \multirow3*{50}  & -0.0122 & -0.0162 & 0.1327 & 0.1323 & 0.1333 & 0.1333 \\ 
   & 0.0112 & 0.0056 & 0.0893 & 0.0980 & 0.0900 & 0.0981 \\ 
   & 0.0119 & 0.0004 & 0.1095 & 0.1070 & 0.1101 & 0.1070 \\
   \toprule
& \multicolumn6c{3S-IMP}\\
 $r$& \multicolumn2c{bias($\tilde{\bl\Gamma}$)} &  \multicolumn2c{se($\tilde{\bl\Gamma}$)} &  \multicolumn2c{rmse($\tilde{\bl\Gamma}$)}\\
\midrule
\multirow3*{5}  & 0.1043 & 0.1361 & 0.1253 & 0.1292 & 0.1631 & 0.1877 \\ 
 & -0.0219 & -0.0297 & 0.0776 & 0.0900 & 0.0807 & 0.0948 \\ 
   & -0.0532 & -0.0740 & 0.0964 & 0.1064 & 0.1101 & 0.1296 \\ 
&&&&&&\\   
\multirow3*{10}   &-0.0042 & 0.0004 & 0.1104 & 0.1210 & 0.1105 & 0.1210 \\ 
   & -0.0177 & 0.0037 & 0.0804 & 0.1045 & 0.0823 & 0.1046 \\ 
   & 0.0058 & 0.0017 & 0.0963 & 0.0936 & 0.0965 & 0.0936 \\ 
&&&&&&\\  
\multirow3*{20}    &-0.0093 & 0.0017 & 0.1227 & 0.1314 & 0.1230 & 0.1314 \\ 
   & 0.0052 & -0.0153 & 0.0818 & 0.0876 & 0.0820 & 0.0889 \\ 
   & 0.0058 & 0.0177 & 0.1036 & 0.1048 & 0.1037 & 0.1062 \\
&&&&&&\\ 
 \multirow3*{50}  & -0.0122 & -0.0162 & 0.1327 & 0.1323 & 0.1333 & 0.1333 \\ 
   & 0.0112 & 0.0056 & 0.0893 & 0.0980 & 0.0900 & 0.0981 \\ 
  & 0.0119 & 0.0004 & 0.1095 & 0.1070 & 0.1101 & 0.1070 \\ 
\bottomrule
\end{tabular}\vspace*{0.2cm}
\caption[Text excluding the matrix]{ \em Bias, standard error (se) and root mean square error (rmse) for the full maximum likelihood and the three-step (standard and improved) estimators of $\b \Gamma$, under Scenario 3 with individual covariates and  $\b \Gamma =\begin{pmatrix}
					\log(0.1/0.9) & \log(0.1/0.9) \\
					0.5 & 0.5 \\
					1 & 1 
					\end{pmatrix}$ }\label{tab:Gacov3}}
\end{table}

Under Scenario 4 we assess the effect of a larger number of latent states on the performances of the
estimators, by considering the following setting:
\begin{itemize}
\item number of latent states: $k=3$;
\item  conditional response probability matrix: 
$\b \Phi_j = \begin{pmatrix} 
                          0.9  & 0.1 & 0.5   \\  
                          0.1 &  0.9  & 0.5
                          \end{pmatrix} $;
\item $\b \beta = \begin{pmatrix}
					0 & 0.5 & 1 & 0 & 0.5 & 1
				\end{pmatrix}\tr$;         
\item $\b \Gamma =\begin{pmatrix}
					\log(0.4/0.6) & \log(0.4/0.6)& \log(0.4/0.6) \\
					0.5 & 0.5 & 0.5 \\
					1 & 1 & 1 \\
					\log(0.4/0.6) & \log(0.4/0.6)& \log(0.4/0.6) \\
					0.5 & 0.5 & 0.5 \\
					1 & 1 & 1 
					\end{pmatrix}$. 
\end{itemize}
The remaining parameters are left unchanged with respect to Scenario 1.
Even under this scenario, we can get the same conclusion about estimation of $\phi_{jy|u}$ and $\b \beta$ as 
in the previous scenarios. About estimation of parameters in $\b \Gamma$, from Table \ref{tab:Gacov4}, in which 
we report only the results for $r=5$ and $r=50$ for reason of space, we observe that the bias of the proposed estimator is quite high for 
$r=5$ but it becomes negligible for larger $r$, even in the standard version (3S) of the estimator.
The behavior of the 3S and 3S-IMP methods, in terms of standard error and root mean square error, is comparable
 with those of the FML approach. 

\begin{table}[ht]\centering\vspace*{0.25cm}
\small{
\begin{tabular}{rrrr|rrr|rrr}
 \toprule
 \multicolumn{10}c{FML}\\
 $r$& \multicolumn3c{bias($\hat{\bl\Gamma}$)} &  \multicolumn3c{se($\hat{\bl\Gamma}$)} &  \multicolumn3c{rmse($\hat{\bl\Gamma}$)}\\
\midrule
\multirow6*{5}  &  -0.025 & -0.027 & -0.020 & 0.155 & 0.128 & 0.137 & 0.157 & 0.131 & 0.139 \\ 
 &0.011 & 0.010 & 0.013 & 0.127 & 0.125 & 0.123 & 0.127 & 0.125 & 0.123 \\ 
  &-0.008 & 0.009 & -0.005 & 0.155 & 0.154 & 0.134 & 0.155 & 0.155 & 0.135 \\ 
  & -0.032 & 0.011 & -0.014 & 0.141 & 0.161 & 0.158 & 0.145 & 0.161 & 0.158 \\
  &-0.003 & -0.012 & 0.003 & 0.117 & 0.109 & 0.135 & 0.117 & 0.110 & 0.135 \\ 
   & -0.007 & 0.026 & -0.003 & 0.126 & 0.142 & 0.132 & 0.126 & 0.144 & 0.132 \\
&&&&&&&&&\\
\multirow6*{50}   & 0.006 & 0.001 & 0.003 & 0.085 & 0.086 & 0.070 & 0.085 & 0.086 & 0.070 \\ 
   & -0.002 & 0.005 & -0.005 & 0.071 & 0.068 & 0.068 & 0.071 & 0.068 & 0.069 \\ 
   & 0.010 & -0.006 & -0.001 & 0.081 & 0.067 & 0.071 & 0.081 & 0.067 & 0.071 \\ 
   & 0.002 & 0.002 & 0.009 & 0.069 & 0.073 & 0.068 & 0.069 & 0.073 & 0.069 \\ 
   & 0.004 & 0.010 & 0.007 & 0.060 & 0.071 & 0.067 & 0.060 & 0.071 & 0.068 \\ 
   & 0.005 & 0.004 & 0.010 & 0.082 & 0.082 & 0.080 & 0.082 & 0.082 & 0.081 \\ 
\toprule
 \multicolumn{10}c{3S}\\
 $r$& \multicolumn3c{bias($\tilde{\bl\Gamma}$)} &  \multicolumn3c{se($\tilde{\bl\Gamma}$)} &  \multicolumn3c{rmse($\tilde{\bl\Gamma}$)}\\ 
\midrule
\multirow6*{5}  & 0.256 & 0.341 & 0.080 & 0.129 & 0.112 & 0.158 & 0.287 & 0.359 & 0.178 \\ 
   & -0.330 & -0.444 & -0.213 & 0.036 & 0.038 & 0.054 & 0.332 & 0.446 & 0.219 \\ 
   & -0.667 & -0.891 & -0.438 & 0.039 & 0.033 & 0.056 & 0.669 & 0.891 & 0.442 \\ 
   & 0.041 & 0.304 & 0.291 & 0.173 & 0.121 & 0.141 & 0.178 & 0.327 & 0.323 \\ 
   & -0.202 & -0.433 & -0.341 & 0.046 & 0.035 & 0.041 & 0.208 & 0.434 & 0.343 \\ 
   & -0.395 & -0.863 & -0.682 & 0.052 & 0.036 & 0.043 & 0.398 & 0.864 & 0.684 \\ 
&&&&&&&&&\\
\multirow6*{50} & 0.000 & 0.008 & 0.006 & 0.071 & 0.076 & 0.083 & 0.071 & 0.077 & 0.084 \\ 
   & -0.004 & 0.000 & 0.011 & 0.069 & 0.063 & 0.063 & 0.069 & 0.063 & 0.064 \\ 
   & 0.002 & -0.018 & 0.013 & 0.078 & 0.068 & 0.076 & 0.078 & 0.070 & 0.077 \\ 
   & -0.002 & 0.004 & 0.014 & 0.065 & 0.079 & 0.077 & 0.065 & 0.079 & 0.078 \\ 
   & -0.001 & -0.006 & 0.008 & 0.067 & 0.071 & 0.071 & 0.067 & 0.072 & 0.071 \\ 
   & 0.001 & -0.014 & 0.014 & 0.086 & 0.073 & 0.077 & 0.086 & 0.075 & 0.078 \\ 
\toprule
 \multicolumn{10}c{3S-IMP}\\
 $r$& \multicolumn3c{bias($\tilde{\bl\Gamma}$)} &  \multicolumn3c{se($\tilde{\bl\Gamma}$)} &  \multicolumn3c{rmse($\tilde{\bl\Gamma}$)}\\ 
\midrule   
\multirow6*{5} & 0.186 & 0.192 & -0.003 & 0.139 & 0.122 & 0.167 & 0.233 & 0.227 & 0.167 \\ 
   & -0.193 & -0.330 & -0.118 & 0.082 & 0.072 & 0.075 & 0.209 & 0.338 & 0.140 \\ 
   & -0.417 & -0.663 & -0.251 & 0.088 & 0.066 & 0.085 & 0.426 & 0.666 & 0.265 \\ 
   & -0.002 & 0.189 & 0.223 & 0.180 & 0.128 & 0.159 & 0.180 & 0.228 & 0.274 \\ 
   & -0.104 & -0.327 & -0.230 & 0.065 & 0.066 & 0.086 & 0.123 & 0.334 & 0.245 \\ 
   & -0.194 & -0.651 & -0.450 & 0.076 & 0.071 & 0.097 & 0.209 & 0.655 & 0.460 \\ 
&&&&&&&&&\\
  \multirow6*{50} & -0.000 & 0.006 & 0.006 & 0.071 & 0.076 & 0.083 & 0.071 & 0.076 & 0.084 \\ 
   & -0.003 & 0.002 & 0.012 & 0.069 & 0.063 & 0.062 & 0.069 & 0.063 & 0.064 \\ 
   & 0.004 & -0.014 & 0.014 & 0.078 & 0.068 & 0.076 & 0.078 & 0.070 & 0.078 \\ 
   & -0.002 & 0.003 & 0.013 & 0.065 & 0.079 & 0.077 & 0.065 & 0.079 & 0.078 \\ 
   & -0.000 & -0.005 & 0.009 & 0.067 & 0.072 & 0.071 & 0.067 & 0.072 & 0.072 \\ 
   & 0.002 & -0.010 & 0.016 & 0.086 & 0.074 & 0.077 & 0.086 & 0.074 & 0.078 \\
\bottomrule
\end{tabular}\vspace*{0.2cm}
\caption[Text excluding the matrix]{ \em Bias, standard error (se) and root mean square error (rmse) for the full maximum likelihood and the three-step (standard and improved) estimators of $\b \Gamma$, under Scenario 4 with individual covariates }\label{tab:Gacov4}}
\end{table}

As demonstrated in the context of LC models by \cite{bolck:et:al:04} and, more recently, by \cite{vermunt:2010} and \cite{bakk:et:al:13},
all the results above show a bias in the parameter estimates due to the classification error introduced in the second step, which confirms that, in general, the three-step approach underestimates the relationship between covariates and class membership. We note this  bias in the estimates of the parameters affecting both the initial and transition probabilities.  
As already stated, this bias becomes negligible when the number of response variables increases; moreover, the proposed 3S-IMP version allows us to obtain an improvement of the behavior of the corresponding estimator, with a 
less evident  bias.

Finally, the inclusion of individual covariates confirms the advantages, in term of computational cost, of the 
three-step estimation method with respect to the full likelihood method. In all scenarios, 
the computing time required by both 3S and 3S-IMP algorithms is significantly lower than
that required by the FML algorithm. 
%The highest differences are registered in Scenario 3, in which the
%standard three-step algorithm takes from 0.24 seconds ($r=5$) to 2.23 seconds ($r=50$),
%and the improved algorithm takes from 4.69 ($r=5$) to 2.83 seconds ($r=50$) versus the 33.92 ($r=5$) 
%and 33.58 ($r=50$) seconds required by the full likelihood algorithm.
The number of  iterations required by the 3S-IMP version for convergence goes from a maximum of 37 (with $r=5$ in Scenario 4) to a 
minimum of 1 (with $r=50$ in Scenario 3).

%\ref{tab:timecov1}

\section{Empirical illustration}
\label{sec:ulisse}
In order to illustrate the proposed estimation approach, we
outline an application based on a real dataset which is derived from a project, named ULISSE
(``Un Link Informatico sui Servizi Sanitari
Esistenti per l'Anziano'' - ``A Computerized Network on
Health Care Services for Older People'') aimed at studying the
health status of elderly patients who currently receive health care
assistance in Italy; see \cite{lattanzio2010}.  In the analysis here
presented, we consider only data referred to the health condition of
patients hosted in 26 nursing homes, which cover $n=911$ patients. 
The project is based on a longitudinal survey in which the patients were evaluated at admission and then
re-evaluated at 6 and 12 month after the admission; therefore we consider $T=3$ time occasions. 

The original questionnaire is made of $r=75$ polytomously-responded
items, with categories ordered according to increasing
difficulty levels in accomplishing a certain task or severeness of a
specific aspect of the health conditions. These items are grouped
into eight different sections of the questionnaire, concerning:
\begin{enumerate}
\item Cognitive Conditions (CC);
\item Auditory and View Fields (AVF); 
\item Humor and Behavioral Disorders (HBD);
\item Activities of Daily Living (ADL);
\item Incontinence (I); 
\item Nutritional Field (NF);
\item Dental Disorder (DD);
\item Skin Conditions (SC).
\end{enumerate}

The complete list of items for each section, with the corresponding number of responses categories, is reported in Appendix. 
In this application, the missing responses to a given item are dealt with the 
{\em missing at random} assumption \citep{rubin:76,litl:rubin:1987}.

The extended LM model allows us to consider time-constant and time-varying individual covariates. In particular,
among time-constant covariates we include gender, and dummy variables for coding
the nursing home to which the subject belongs, whereas among time-varying covariates we include
age and time interval between occasions of administration of the questionnaire. Accordingly, we have
$p_1=27$ covariates affecting the logit for the initial probabilities and $p_2=28$ covariates
affecting the logit for the transition probabilities. 
In order to make the model more parsimonious, 
we rely on the following parameterization for the transition probabilities, 
that is, a multinomial logit based on the difference between two sets of parameters
\[
\log \frac{p(U^{(t)}=v|U^{(t-1)}=u,\b x^{(t)})}
{p(U^{(t)}=u|U^{(t-1)}=u,\b x^{(t)})} =
\gamma_{0uv}+(\b x_i^{(t)})\tr (\b\gamma_{1u}-\b\gamma_{1v}).
\quad t \geq 2, \;\;\;u\neq v,
\]
where $\b\gamma_{11}=0$ to ensure model identifiability.
The parameterization used for modeling the initial probabilities is again based on standard multinomial logit, as defined in (\ref{eq:be}).
For this illustration we consider a fixed number of latent classes, $k=4$.
Even considering the above parametrization, the number of free parameters to be estimated
is very large and the computing time required by the full likelihood approach may become 
excessive, especially because the EM algorithm requires a large number of random initializations to 
increase the chance of the convergence to the global maximum of the model log-likelihood. In such a context, 
the computational cost of the three-step algorithm is significantly lower than that of the
full likelihood approach, while reaching very similar performance in terms of parameter estimates.

In applying the three-step estimation method to this data, we first fit the
basic LC model for the set of response variables, in which the responses provided
by the same subject at different occasions are considered as corresponding 
to separate sample units. By using this approach, we can initialize the EM algorithm by a number
 of random starting values equal to $100(k -1)$ and take the estimates corresponding to the highest log-likelihood 
 at convergence of the algorithm.  

From the first step of the proposed approach we obtain the final estimates of the conditional response probabilities. Since the items are categorical, with a different number of
categories, we compute the  following {\em item mean score} to make the interpretation of the results easier:
\[\hat{\mu}_{j|u} = \frac{1}{c_j-1}\sum_y
y\;\hat{\phi}_{jy|u},\quad j=1,\ldots,r,\;\;\; u=1,\ldots,k, \;\;\; y=0,\ldots,c_j-1.  
\]
In particular, a value of $\hat{\mu}_{j|u}$ close to 0 corresponds
to a low probability of suffering from a certain pathology, whereas
a value close to 1 corresponds to a high probability of suffering
from the same pathology. To summarize these results, we also compute the  {\em section mean score}
$\hat{\bar{\mu}}_{d|u}$ as the average of $\hat{\mu}_{j|u}$ for the
items composing each section $d$ of the
questionnaire, with $d=1,\ldots,8$.

In order to interpret the results
we order the latent states on the basis of the values of
$\hat{\bar{\mu}}_{d|u}$ assumed in the section denoted by ADL
(Activity of Daily Living) of the questionnaire, which is the section with the
highest difference between the maximum and the minimum value of the
section mean score across states.
For each latent state, Table~\ref{tab:average} shows the values of
$\hat{\bar{\mu}}_{d|u}$, together with the difference between the maximum and the minimum value
of $\hat{\bar{\mu}}_{d|u}$ for each section of the questionnaire. 

\begin{table}[ht]\centering\vspace*{0.25cm}
\small{
\begin{tabular}{p{1.8cm}|rrrrrrrr}
\toprule & \multicolumn8{c}{$d$}\\
 \cline{2-9} & \multicolumn1c{1}
& \multicolumn1c{2} & \multicolumn1c{3} &
\multicolumn1c{4} & \multicolumn1c{5}& \multicolumn1c{6} & \multicolumn1c{7}& \multicolumn1{c}{8}    \\
 \multicolumn1{c|}{$u$} & \multicolumn1c{(CC)}
& \multicolumn1c{(AVF)} & \multicolumn1c{(HBD)} &
\multicolumn1c{(ADL)} & \multicolumn1c{(I)}& \multicolumn1c{(NF)} &\multicolumn1c{(DD)}&  \multicolumn1{c}{(SC)} \\
\midrule
\multicolumn1{c|}{1}	&	0.1083 & 0.1356 & 0.0943 & 0.1148 & 0.2837 & 0.0602 & 0.2214 & 0.0227 \\ 
\multicolumn1{c|}{2}	& 0.6227 & 0.4023 & 0.2335 & 0.2912 & 0.6653 & 0.0864 & 0.2133 & 0.0204 \\
\multicolumn1{c|}{3}	& 0.1927 & 0.1896 & 0.1246 & 0.6018 & 0.6736 & 0.0929 & 0.2246 & 0.0546 \\ 
\multicolumn1{c|}{4}	& 0.7063 & 0.5718 & 0.1436 & 0.7850 & 0.8977 & 0.1566 & 0.2232 & 0.1076 \\ 
\midrule
$\max_{u}(\hat{\bar{\mu}}_{d|u}) - \min_u(\hat{\bar{\mu}}_{d|u})$     &   0.5980  &  0.4362 & 0.1392 & 0.6702 & 0.6140   & 0.0964 & 0.0113 & 0.0872      \\
 \bottomrule
\end{tabular}
\vspace*{0.2cm}
\caption{\em Estimated section mean score, $\hat{\bar{\mu}}_{d|u}$,
for each latent state $u$ and each section $d$ of the questionnaire
together with the difference
between the largest and the smallest estimated section mean score
for each section.}\label{tab:average}}
\end{table}

As we can see, smaller differences are observed for sections
DD, SC, NF, and HBD which, consequently, tend to discriminate less between subjects with respect to the other sections.
The first state corresponds to the best health conditions with respect to all the pathologies measured
by the sections of the questionnaire, apart from DD and SC. 
On the other hand, the fourth state corresponds to cases with 
the worst health conditions for almost all the pathologies. Intermediate states show a different case-mix
depending on the section mean score pattern. In particular, the second state corresponds to patients with poor cognitive and auditory 
and view conditions (CC and AVF).  The third state corresponds to patients with deteriorated physical  
conditions with respect to the pathologies measured by section ADL. 
 
Once the conditional response probabilities are estimated
by the first step of the proposed approach, we can 
estimate the parameters $\b \beta = (\beta_{0u},\b\beta_{1u}\tr)\tr$, $\gamma_{0uv}$ and $\b \gamma_{1u}$ by fitting multinomial logit models with weights.
In Table \ref{tab:beGa} we report the results of this step. The same table also shows the significance of the estimates. The standard errors of the estimates has been obtained by a non-parametric bootstrap based on 99 samples. 
Note that, in this application, a bootstrap resampling procedure in which the parameters are estimated by means of the full likelihood approach
is unfeasible, due to the considerable computational cost required by this approach.

\begin{table}[ht]\centering\vspace*{0.25cm}
\small{
\begin{tabular}{rrrrrrr}
  \toprule
  & \multicolumn3c{initial probabilities} & \multicolumn3c{transition probabilities} \\
 & \multicolumn1c{$\b\beta_2$} &  \multicolumn1c{$\b\beta_3$} &  \multicolumn1c{$\b\beta_4$} & \multicolumn1c{$\b\gamma_2$} & \multicolumn1c{$\b\gamma_3$} & \multicolumn1c{$\b\gamma_4$}\\ 
 \midrule
intercept	&	0.0156	\hspace{4mm}	&	-1.6610	\hspace{4mm}	&	-3.4234	$^{**}$\hspace{1.2mm} 	&	 \multicolumn1c{-} 		&	 \multicolumn1c{-} 		&	 \multicolumn1c{-} 		\\
gender 	&	0.3756	\hspace{4mm}	&	0.5485	$^*$\hspace{2.5mm} 	&	0.7037	$^{**}$\hspace{1.2mm} 	&	-0.0532	\hspace{4mm}	&	-0.1858	\hspace{4mm}	&	-0.0261	\hspace{4mm}	\\
nh$_1$	&	-0.1961	\hspace{4mm}	&	0.8121	\hspace{4mm}	&	0.0020	\hspace{4mm}	&	-1.0676	\hspace{4mm}	&	-0.0694	\hspace{4mm}	&	-0.9278	\hspace{4mm}	\\
nh$_2$	&	-1.9504	$^{**}$\hspace{1.2mm} 	&	-0.3918	\hspace{4mm}	&	-2.3623	$^{***}$	&	-2.2406	$^*$\hspace{2.5mm} 	&	-1.3643	\hspace{4mm}	&	-2.5257	$^*$\hspace{2.5mm} 	\\
nh$_3$	&	-0.1054	\hspace{4mm}	&	-0.4334	\hspace{4mm}	&	-0.2467	\hspace{4mm}	&	0.9412	\hspace{4mm}	&	1.1634	\hspace{4mm}	&	0.4836	\hspace{4mm}	\\
nh$_4$	&	-1.0478	$^{\displaystyle{\cdot}}$\hspace{3mm} 	&	-0.8929	\hspace{4mm}	&	-1.4349	$^{**}$\hspace{1.2mm} 	&	-1.0167	\hspace{4mm}	&	-0.9319	\hspace{4mm}	&	-1.1176	\hspace{4mm}	\\
nh$_5$	&	-1.2677	$^{\displaystyle{\cdot}}$\hspace{3mm} 	&	-1.0193	\hspace{4mm}	&	-0.5042	\hspace{4mm}	&	-1.6953	$^{\displaystyle{\cdot}}$\hspace{3mm} 	&	-0.4376	\hspace{4mm}	&	-1.5519	\hspace{4mm}	\\
nh$_6$	&	-0.6045	\hspace{4mm}	&	-1.1527	\hspace{4mm}	&	-0.5333	\hspace{4mm}	&	-1.1935	\hspace{4mm}	&	-0.0120	\hspace{4mm}	&	-1.8624	\hspace{4mm}	\\
nh$_7$	&	-3.0934	\hspace{4mm}	&	-0.7201	\hspace{4mm}	&	-0.3363	\hspace{4mm}	&	-1.6755	\hspace{4mm}	&	-0.5311	\hspace{4mm}	&	-1.4346	\hspace{4mm}	\\
nh$_8$	&	0.3198	\hspace{4mm}	&	0.8767	\hspace{4mm}	&	0.8430	\hspace{4mm}	&	-1.7931	\hspace{4mm}	&	-1.0027	\hspace{4mm}	&	-2.4452	$^*$\hspace{2.5mm} 	\\
nh$_9$	&	0.0048	\hspace{4mm}	&	0.2805	\hspace{4mm}	&	-0.7442	\hspace{4mm}	&	-2.2255	$^*$\hspace{2.5mm} 	&	-1.4504	$^{\displaystyle{\cdot}}$\hspace{3mm} 	&	-2.5039	$^{**}$\hspace{1.2mm} 	\\
nh$_{10}$	&	-2.0851	\hspace{4mm}	&	-1.4617	\hspace{4mm}	&	-3.5955	\hspace{4mm}	&	-2.9837	\hspace{4mm}	&	-1.1377	\hspace{4mm}	&	-2.9800	$^*$\hspace{2.5mm} 	\\
nh$_{11}$	&	-1.6932	$^*$\hspace{2.5mm} 	&	-1.0069	\hspace{4mm}	&	-1.8672	$^{**}$\hspace{1.2mm} 	&	-2.8173	$^*$\hspace{2.5mm} 	&	-0.9797	\hspace{4mm}	&	-2.5322	$^*$\hspace{2.5mm} 	\\
nh$_{12}$	&	0.7904	\hspace{4mm}	&	0.5132	\hspace{4mm}	&	1.0443	\hspace{4mm}	&	-1.2102	\hspace{4mm}	&	-1.5142	\hspace{4mm}	&	-2.0071	\hspace{4mm}	\\
nh$_{13}$	&	-1.5811	$^{\displaystyle{\cdot}}$\hspace{3mm} 	&	-2.3327	\hspace{4mm}	&	-1.6188	$^{**}$\hspace{1.2mm} 	&	-2.3505	$^{**}$\hspace{1.2mm} 	&	-0.9972	\hspace{4mm}	&	-2.2378	$^{**}$\hspace{1.2mm} 	\\
nh$_{14}$	&	-0.5118	\hspace{4mm}	&	-0.4459	\hspace{4mm}	&	-6.3197	$^{***}$	&	-2.5696	$^*$\hspace{2.5mm} 	&	-0.3511	\hspace{4mm}	&	-1.9662	\hspace{4mm}	\\
nh$_{15}$	&	-2.1905	\hspace{4mm}	&	-0.3920	\hspace{4mm}	&	-1.4533	$^{\displaystyle{\cdot}}$\hspace{3mm} 	&	-1.5935	\hspace{4mm}	&	-1.4392	\hspace{4mm}	&	-2.1450	\hspace{4mm}	\\
nh$_{16}$	&	0.3121	\hspace{4mm}	&	0.2009	\hspace{4mm}	&	0.0735	\hspace{4mm}	&	-0.2615	\hspace{4mm}	&	0.6656	\hspace{4mm}	&	-0.2236	\hspace{4mm}	\\
nh$_{17}$	&	-0.6438	\hspace{4mm}	&	-0.0830	\hspace{4mm}	&	-0.2925	\hspace{4mm}	&	-1.4858	$^{\displaystyle{\cdot}}$\hspace{3mm} 	&	-0.9442	\hspace{4mm}	&	-2.7412	$^{***}$	\\
nh$_{18}$	&	5.5294	$^{***}$	&	-1.1376	\hspace{4mm}	&	5.5737	$^{***}$	&	0.9654	\hspace{4mm}	&	1.7519	\hspace{4mm}	&	-0.3829	\hspace{4mm}	\\
nh$_{19}$	&	-0.1291	\hspace{4mm}	&	0.5499	\hspace{4mm}	&	0.2007	\hspace{4mm}	&	-1.5408	$^{\displaystyle{\cdot}}$\hspace{3mm} 	&	-1.6047	\hspace{4mm}	&	-1.8218	$^*$\hspace{2.5mm} 	\\
nh$_{21}$	&	0.1693	\hspace{4mm}	&	-0.8810	\hspace{4mm}	&	-0.6420	\hspace{4mm}	&	-1.0734	\hspace{4mm}	&	-1.6393	\hspace{4mm}	&	-1.8884	\hspace{4mm}	\\
nh$_{22}$	&	-0.1079	\hspace{4mm}	&	0.4034	\hspace{4mm}	&	-0.8618	\hspace{4mm}	&	-0.5954	\hspace{4mm}	&	-0.7328	\hspace{4mm}	&	-2.4313	\hspace{4mm}	\\
nh$_{23}$	&	0.9074	\hspace{4mm}	&	2.0485	\hspace{4mm}	&	2.4644	\hspace{4mm}	&	-2.5099	\hspace{4mm}	&	-2.2295	\hspace{4mm}	&	-3.2658	\hspace{4mm}	\\
nh$_{24}$	&	-1.8094	$^*$\hspace{2.5mm} 	&	-0.2122	\hspace{4mm}	&	-0.8197	\hspace{4mm}	&	-2.1891	$^*$\hspace{2.5mm} 	&	-0.8078	\hspace{4mm}	&	-2.3811	$^{**}$\hspace{1.2mm} 	\\
nh$_{25}$	&	-0.0354	\hspace{4mm}	&	0.9532	\hspace{4mm}	&	-0.2216	\hspace{4mm}	&	-1.9191	$^{\displaystyle{\cdot}}$\hspace{3mm} 	&	-1.7159	\hspace{4mm}	&	-3.3229	$^*$\hspace{2.5mm} 	\\
nh$_{26}$	&	-0.0430	\hspace{4mm}	&	-0.1403	\hspace{4mm}	&	-0.4727	\hspace{4mm}	&	-1.9055	$^*$\hspace{2.5mm} 	&	-0.6472	\hspace{4mm}	&	-2.6104	$^{**}$\hspace{1.2mm} 	\\
diff-time	&	 \multicolumn1c{-} 		&	 \multicolumn1c{-} 		&	 \multicolumn1c{-} 		&	0.1753	\hspace{4mm}	&	-0.4054	\hspace{4mm}	&	-0.0320	\hspace{4mm}	\\
age	&	0.0003	\hspace{4mm}	&	0.0149	\hspace{4mm}	&	0.0427	$^{**}$\hspace{1.2mm} 	&	0.0370	$^{\displaystyle{\cdot}}$\hspace{3mm} 	&	0.0542	$^*$\hspace{2.5mm} 	&	0.0766	$^{**}$\hspace{1.2mm} 	\\
\bottomrule
\end{tabular}}
\begin{center}
\scriptsize{$^{***}$significant at the 0.1\% level - $^{**}$significant at the 1\% level - $^*$significant at the 5\% level - $^{\displaystyle{\cdot}}$significant at the 10\% level}
\end{center}
\caption{\em Estimates of the regression parameter affecting the distribution of the initial and transition probabilities.}\label{tab:beGa}
\end{table}

From the table we note that gender has a significant effect on the logit of the initial probability, especially
on the probability of being in the last state with respect to the first. The same can be said for age, which also shows 
a significant effect on the transition probabilities. Moreover, we note that certain nursing homes have
a higher effect on both the initial and transition probabilities with respect to others. In such a context, it may be
of interest to study the ability of the facilities to retain over time patients in the initial latent state.

For an easier interpretation of the results, we also report the estimated initial and transition probabilities of the latent
Markov process.
More in detail, Table \ref{tab:init} shows the means of the estimated initial probabilities, $\bar{\tilde\pi}_u$, $u=1,\ldots,k$, with $k=4$, for age classes and gender. The results show that, at the beginning 
of the study, males have a higher probability of being in the first state,
 which corresponds to the best health condition, with respect to females. This is reasonable,
 as the health status of women who require admission to nursing homes is likely worse than that of men of the same age.
This because women are typically  
more able to take care of themselves. %take care of
Moreover, it is obvious that older patients (both male and female) have a higher probability
%:
of being in in fifth state, corresponding to the worse health condition.  

\begin{table}[ht]\centering\vspace*{0.25cm}
\small{
\begin{tabular}{lcrrr}
\toprule
    &     laten states  &  \multicolumn2{c}{gender}          \\
age   &   $u$  &   \multicolumn1{c}{M}   &   \multicolumn1{c}{F} &  overall \\
\midrule
$\leq 75$    &  1 & 0.4419 & 0.2932 & 0.3672 \\ 
&  2 & 0.2402 & 0.2608 & 0.2506 \\ 
 & 3 & 0.1584 & 0.2017 & 0.1801 \\ 
  &4 & 0.1595 & 0.2442 & 0.2021 \\ 
    &       &       &       \\
$75< \rm{age} \leq 85$ &   1   & 0.3780 & 0.2819 & 0.3106 \\ 
  &2 & 0.2161 & 0.1982 & 0.2036 \\ 
  &3 & 0.1817 & 0.2199 & 0.2085 \\ 
  &4 & 0.2242 & 0.3000 & 0.2773 \\ 	
     &       &       &       \\
$> 85$ 	&   1   & 0.3626 & 0.2610 & 0.2760 \\ 
  & 2 & 0.1777 & 0.1549 & 0.1583 \\ 
  & 3 & 0.1757 & 0.2236 & 0.2165 \\ 
  & 4 & 0.2840 & 0.3605 & 0.3493 \\ 
  \bottomrule
\end{tabular}
\vspace*{0.2cm}
\caption{\em Means of the estimated initial probabilities,  $\bar{\tilde\pi}_u$,
over all the different nursing homes, for age classes and
gender}\label{tab:init}}
\end{table}

In Table \ref{tab:trans_all} we report the overall means of the estimated transition probabilities, $\tilde{\pi}^{(t)}_{v|u}$, for all subjects in the sample, whereas in Table \ref{tab:trans_nh} we report the means of $\tilde{\pi}^{(t)}_{v|u}$ computed for two nursing homes with a similar number of patients but with different estimated transition matrices.     

\begin{table}[ht]\centering\vspace*{0.25cm}
\small{
\begin{tabular}{rrrrrr}
\toprule
       &       & \multicolumn4{c}{latent state $u$}\\
  $t$   & latent state  $v$   &    \multicolumn1{c}{1}   &    \multicolumn1{c}{2}   &    \multicolumn1{c}{3} &    \multicolumn1{c}{4}    \\
\midrule
   2   & 1 & 0.8417 & 0.0728 & 0.0703 & 0.0151 \\ 
  &2 & 0.0259 & 0.7781 & 0.0474 & 0.1486 \\ 
 & 3 & 0.0278 & 0.0525 & 0.8382 & 0.0815 \\ 
 & 4 & 0.0000 & 0.0357 & 0.0217 & 0.9426 \\ 
 \vspace{-1.5mm}
\\
   3   &1 & 0.8377 & 0.0734 & 0.0733 & 0.0156 \\ 
 & 2 & 0.0253 & 0.7747 & 0.0495 & 0.1505 \\ 
  &3 & 0.0272 & 0.0531 & 0.8359 & 0.0838 \\ 
  &4 & 0.0000 & 0.0349 & 0.0222 & 0.9429 \\ 
\bottomrule
\end{tabular}
\vspace*{0.2cm}
\caption{\em Overall means of the estimated transition
probabilities, $\bar{\tilde{\pi}}^{(t)}_{v|u}$ }\label{tab:trans_all}}
\end{table}

\begin{table}[ht]\centering\vspace*{0.25cm}
\small{
\begin{tabular}{rrrrrrr}
\toprule
 &      &       & \multicolumn4{c}{latent state $u$}\\
 & $t$   & latent state  $v$   &    \multicolumn1{c}{1}   &    \multicolumn1{c}{2}   &    \multicolumn1{c}{3} &    \multicolumn1{c}{4}    \\
  \midrule
\multirow9*{nh$_{3}$} &  2   & 1 & 0.4320 & 0.3129 & 0.2057 & 0.0494 \\ 
 && 2 & 0.0019 & 0.8429 & 0.0273 & 0.1279 \\ 
 && 3 & 0.0031 & 0.0704 & 0.8268 & 0.0997 \\ 
 && 4 & 0.0000 & 0.0354 & 0.0135 & 0.9511 \\ 
 \vspace{-1.5mm}
\\
&   3   &1 & 0.4256 & 0.3057 & 0.2184 & 0.0503 \\ 
&&  2 & 0.0019 & 0.8360 & 0.0292 & 0.1329 \\ 
 && 3 & 0.0030 & 0.0671 & 0.8332 & 0.0966 \\ 
  &&4 & 0.0000 & 0.0344 & 0.0138 & 0.9518 \\ 
\midrule
\multirow9*{nh$_{25}$} &  2   &1 & 0.9359 & 0.0366 & 0.0253 & 0.0022 \\ 
 && 2 & 0.0372 & 0.8841 & 0.0294 & 0.0493 \\ 
 && 3 & 0.0581 & 0.0695 & 0.8366 & 0.0357 \\ 
 && 4 & 0.0001 & 0.0885 & 0.0345 & 0.8769 \\ 
 \vspace{-1.5mm}
\\
&   3   &1 & 0.9354 & 0.0405 & 0.0219 & 0.0023 \\ 
 && 2 & 0.0342 & 0.8957 & 0.0234 & 0.0467 \\ 
&&  3 & 0.0650 & 0.0859 & 0.8079 & 0.0413 \\ 
 && 4 & 0.0001 & 0.0947 & 0.0289 & 0.8763 \\
\bottomrule
\end{tabular}
\vspace*{0.2cm}
\caption{\em Overall means of the estimated transition
probabilities, $\bar{\tilde{\pi}}^{(t)}_{v|u}$ for two different nursing homes}\label{tab:trans_nh}}
\end{table}

According to the tables, we observe a quite high persistence in the same state, especially for the first 
and the last state. Moreover, patients classified in the second state present an average transition probability 
toward the fourth state, corresponding to severe health conditions, of around 0.15.
%This tendency is more evident for the older age classes. 
We also note a very low probability of transition
from higher to lower states, meaning that it is quite unlikely an improvement of the patients health conditions. 
This probability drops to zero when we consider the transition from the fourth state to the first state, corresponding to 
the best health status. We finally observe that different nursing homes may have a very different behavior. 
As an example, nursing home nh$_{3}$ presents a low persistence in the first latent state with a quite 
high probability of transition toward the second and the third latent states (between 0.2 and 0.3). On the other hand,
nursing home nh$_{25}$ presents very high persistence probabilities. Moreover, the transition probabilities toward
latent states corresponding to deteriorated conditions are alway lower than 0.05.

\section{Conclusions and further developments}\label{sec:6}

In this paper, we propose an extended version of the three-step approach \citep{vermunt:2010}  
to estimate LM models, which are typically used in 
applications involving longitudinal data. In such a context, it is not uncommon that
the number of response variables for each time occasion is very large, and many
latent states are specified. Moreover, time-constant and time-varying individual 
covariates may be included in the model. 

The proposed estimation method may represent a valid alternative to the full maximum likelihood 
approach, which is typically based on the EM algorithm. 
More in detail, when applied to LM models, the proposed three-step estimation consists
of a preliminary clustering of the subjects, on the basis of the time specific responses only,
in which the responses provided by the same subject to different time occasions are considered as 
separated units. Moreover, every sample unit is not strictly assigned to a latent state
at each occasion, and this state may change across time, so as to allow the estimation of the parameters
of the latent Markov process. We also propose an improved version of the three-step estimator, in which the last two steps
of the algorithm are iterated until convergence, while keeping fixed the results from the first step. 

We provide a proof that this approach leads to consistent estimates of the
conditional response probabilities. Moreover, when the number of response variables tends to infinity, 
even the parameters of the latent process are consistently estimated. 

We perform a simulation study aimed at assessing the behavior of the proposed 
estimation algorithms under different scenarios, in comparison with that of the full maximum
likelihood approach. On the basis of the results of this study, it is possible to conclude that the behavior,
in terms of bias and root mean square error, of the 
three-step estimator improves with the number of response variables, 
with the separation between latent states, and 
as the level of persistence of the latent Markov chain decreases.  
Moreover, we observe that, under certain scenarios, the improved version of the three-step method outperforms
the standard one, even significantly, with a reduction of the bias, already detected in the
context of  LC model, in the estimates 
of the relationship between class membership and covariates  \citep{bolck:et:al:04,vermunt:2010,bakk:et:al:13}.
  
The potential of the proposed approach is also illustrated 
with an application to real data, involving
a very large number of response variables and individual covariates. 
In such a complex study, we observe that the proposed three-step
estimation method may overcome some
of the typical drawbacks of the full likelihood estimation, first of all the slowness to converge and 
the presence of many local maxima. Moreover, a bootstrap resampling procedure 
aimed at computing the standard errors of the estimates may be performed with
a reasonable effort.  The advantage in terms of computational cost may also be 
exploited to perform cross-validation, so as to select the number of latent states \citep{smyth:00}.
Finally, the results of the three-step estimation method may be used to define sensible starting values of the EM algorithm 
in the full maximum likelihood approach,
so as to prevent the problem of the multimodality of the model log-likelihood. 
All these advantages are obtained while having very similar results
in terms of parameter estimates.  

Finally, It is worth noting that, throughout the paper, we
consider the case of categorical response variables because this is
the typical case of application of LM models. However, we plan to extend the proposed estimation 
approach when the outcomes are all
continuous, or in the 
case of mixed outcomes, and of latent variables with different state
spaces.  Moreover, further research is necessary to find a method for easily obtaining reliable standard
errors for the parameter estimates, without the need of implementing a bootstrap
procedure. 

\section*{Appendix}
\begin{table}[ht]
\centering
\tiny{
\begin{tabular}{ccl}
\toprule
\bf $j$ & \bf $\#$ cat.  & \bf item description                           \\
\midrule
\multicolumn3c{Section CC} \\
\midrule
01    &  2  &  Short-term memory   (0 = ``recalls what recently happened (5 minutes)'', 1 = ``does not recall'')\\
02    &  2  &  Long-term memory (0 = ``keeps some past memories green'', 1 = ``does not keep some past memories green'') \\
03    &  2  &  Memory status (0 = ``recalls the actual season'', 1 = ``does not recall the actual season'')                   \\
04    &  2  &  Memory status (0 = ``recalls where is his room'', 1 = ``does not recall where is his room'')                   \\
05    &  2  &  Memory status  (0 = ``recalls the names and faces of the staff'', 1 = ``does not recall the names and faces of the staff'')  \\
06   &  2  &  Memory status (0 = ``recalls where he is'',  1 =``does not recall where he is'')                     \\
07   &  4  &  Decision about his daily activities (from 0 = ``independent decisions'' to 3 = ``unable to decide'') \\
08   &  3  &  Easily sidetracked  (from 0 = ``problems absent'' to 2 = ``problems worsened in the last week'')     \\
09   &  3  &  Altered perception or awareness of surrounding (from 0 = ``problems absent'' to 2 = ``problems worsened in the last week'') \\
10   &  3  &  Disorganized speech (from 0 = ``problems absent'' to 2 = ``problems worsened in the last week'')       \\
11   &  3  &  Restlessness movements (from 0 = ``problems absent'' to 2 = ``problems worsened in the last week'')     \\
12   &  3  &  Lethargic spans (from 0 = ``problems absent'' to 2 = ``problems worsened in the last week'')       \\
13  &  3  &  Change in the cognitive conditions during the day (from 0 = ``problems absent'' to 2 = ``problems worsened in the last week'')    \\
\midrule \multicolumn3c{ Section AVF} \\
 \midrule
14 &  4  &  Hearing (from 0 =``no hearing impairment'' to 3 = ``severe hearing impairment'')    \\
15   &  4  &  Ability to make itself understood  (from 0 = ``understood'' to 3 = ``seldom/never understood'')    \\
16   &  3  &  Clear language (from 0 = ``clear language'' to 2 = ``no language'')        \\
17  &  4  &  Ability to understand others (from 0 = ``understands'' to 3 = ``seldom/never understands'')    \\
18  &  5  &  Sight in conditions of adequate lighting (from 0 = ``no sight impairment'' to 4 = ``severe sight impairment'') \\
\midrule
\multicolumn3c{ Section HDB} \\
\midrule
19  &  3  &  Negative statements (from 0 = ``symptom not showed'' to 2 = ``symptom daily showed'')       \\
20  &  3  &   Repetitive questions (from 0 = ``symptom not showed'' to 2 = ``symptom daily showed'')       \\
21    &  3  &   Repetitive verbalizations (from 0 = ``symptom not showed'' to 2 = ``symptom daily showed'')       \\
22   &  3  &   Persistent anger with himself or others (from 0 = ``symptom not showed'' to 2 = ``symptom daily showed'')      \\
23     &  3  &   Self deprecation disesteem (from 0 = ``symptom not showed'' to 2 = ``symptom daily showed'')      \\
24   &  3  &   Fears that are not real (from 0 = ``symptom not showed'' to 2 = ``symptom daily showed'')       \\
25    &  3  &   To believe himself to be dying (from 0 = ``symptom not showed'' to 2 = ``symptom daily showed'')     \\
26     &  3  &   To complain about his health (from 0 = ``symptom not showed'' to 2 = ``symptom daily showed'')     \\
27     &  3  &   Repeated events anxiety (from 0 = ``symptom not showed'' to 2 = ``symptom daily showed'')\\
28    &   3  &   Unpleasant mood in morning (from 0 = ``symptom not showed'' to 2 = ``symptom daily showed'')    \\
29   &   3  &   Insomnia/problems with sleep (from 0 = ``symptom not showed'' to 2 = ``symptom daily showed'')    \\
30 &   3  &   Expressions of sad-faced (from 0 = ``symptom not showed'' to 2 = ``symptom daily showed'')       \\
31  &    3  &   Easily tears  (from 0 = ``symptom not showed'' to 2 = ``symptom daily showed'')      \\
32  &    3  &   Repetitive movements (from 0 = ``symptom not showed'' to 2 = ``symptom daily showed'')     \\
33  &   3  &   Abstention from activities of interest (from 0 = ``symptom not showed'' to 2 = ``symptom daily showed'')       \\
34  &     3  &   Reduced local interactions (from 0 = ``symptom not showed'' to 2 = ``symptom daily showed'')       \\
35  &    4  &   To wander aimlessly (from 0 = ``problem absent'' to 3 = ``problem daily encountered'')      \\
36  &   4  &    Offensive language (from 0 = ``problem absent'' to 3 = ``problem daily encountered'')    \\
37  &     4  &    Physically aggressive (from 0 = ``problem absent'' to 3 = ``problem daily encountered'')       \\
38  &    4  &    Socially inappropriate behavior (from 0 = ``problem absent'' to 3 = ``problem daily encountered'')      \\
39  &      4  &    To refuse assistance (from 0 = ``problem absent'' to 3 = ``problem daily encountered'')     \\
\midrule \multicolumn3c{ Section ADL} \\
\midrule
40  &  5  &   Moving to/from lying position (from 0 = ``independent'' to 4 = ``totally dependent'')    \\
41  &   5  &   Moving to/from bed, chair, wheelchair (from 0 = ``independent'' to 4 = ``totally dependent'')     \\
42  &  5  &   Walking between different points within the room (from 0 = ``independent'' to 4 = ``totally dependent'')     \\
43  &   5  &   Walking in the corridor (from 0 = ``independent'' to 4 = ``totally dependent'')       \\
44  & 5  &   Walking into the nursing home ward (from 0 = ``independent'' to 4 = ``totally dependent'')       \\
45  &  5  &   Walking outside the nursing home ward (from 0 = ``independent'' to 4 = ``totally dependent'')    \\
46  &   5  &   Dressing (from 0 = ``independent'' to 4 = ``totally dependent'')     \\
47  &  5  &   Eating  (from 0 = ``independent'' to 4 = ``totally dependent'')      \\
48  &  5  &   Using the toilet room (from 0 = ``independent'' to 4 = ``totally dependent'')       \\
49  &  5  &   Personal hygiene  (from 0 = ``independent'' to 4 = ``totally dependent'')    \\
50  &   5  &   Taking full-body bath/shower (from 0 = ``independent'' to 4 = ``totally dependent'')     \\
51  &  4  &   Balance problems (from 0 = ``does not have balance problems'' to 3 = ``needs physical assistance'')       \\
52  &   3  &   Mobility in the neck (0 = ``no limitation'', 1 = ``unilateral limitation'', 2 = ``bilateral limitation'')      \\
53  &    3  &   Mobility in the arm including shoulder or elbow (0 = ``no limitation'', 1 = ``unilateral limitation'', 2 = ``bilateral limitation'')        \\
54  &   3  &   Movements of the hand including wrist or finger     (0 = ``no limitation'', 1 = ``unilateral limitation'', 2 = ``bilateral limitation'')    \\
55  &   3  &   Mobility in the leg and hip (0 = ``no limitation'', 1 = ``unilateral limitation'', 2 = ``bilateral limitation'')         \\
56  &      3  &   Mobility in the foot and ankle (0 = ``no limitation'', 1 = ``unilateral limitation'', 2 = ``bilateral limitation'')         \\
57  &     3  &   Other movements (0 = ``no limitation'', 1 = ``unilateral limitation'', 2 = ``bilateral limitation'')          \\
\midrule \multicolumn3c{ Section I}\\
\midrule
58   &  5  &     Fecal incontinence (from 0 = ``continence'' to 4 = ``incontinence'')      \\
59       &  5  &     Urinary incontinence (from 0 = ``continence'' to 4 = ``incontinence'')        \\
60     &  2  &     Elimination of feces (0 = ``adequate'', 1 = ``not adequate'')      \\
\midrule \multicolumn3c{ Section NF}\\
\midrule
61  &   2  &      Chewing problem (0 = ``no problem'', 1 = ``problems'')        \\
62  &  2  &      Swallowing problem  (0 = ``no problem'', 1 = ``problems'')         \\
63  &  2  &      Mouth pain (0 = ``no problem'', 1 = ``problems'')           \\
64  &  2  &      Taste of many foods (0 = ``does not complain'', 1 = ``complains'')       \\
65  &  2  &      Hungry  (0 = ``does not complain'', 1 = ``complains'')   \\
66  & 2  &      Food on his plate (0 = ``does not leave it'', 1 = ``leaves it'')      \\
\midrule \multicolumn3c{ Section DD}\\
\midrule
67  &   2  &      Debris present in mouth prior to going to bed at night (0 = ``problem absent'', 1 = ``problem present'') \\
68  &   2  &      Dentures/removable bridge (0 = ``absent'', 1 = ``present'') \\
69  &  2  &     Some/all natural teeth lost and does not have/does not use dentures (or partial plates) (0 = ``problem absent'', 1 = ``problem present'') \\
70  &   2  &     Broken, loose, or carious teeth  (0 = ``problem absent'' 1 = ``problem present'')         \\
71  &   2  &     Inflamed gums, swollen or bleeding gums, oral abscesses, ulcers or rashes  (0 = ``problem absent'', 1 = ``problem present'') \\
72  &  2  &     Dentures or removable bridge daily cleaned by resident or staff (0 = ``absent'', 1 = ``present'') \\
     \midrule \multicolumn3c{ Section SC}\\
\midrule
73  &   5 &      Pressure ulcer (from 0 = ``no pressure ulcer'' to 4 = ``stage 4'')   \\
74  &   5 &      Stasis ulcers  (from 0 = ``no pressure ulcer'' to 4 = ``stage 4'')    \\
75  &    2 &      Resolved or cured ulcer (0 =``absent'', 1 = ``present'')    \\
   \bottomrule
\end{tabular}}\vspace*{0.2cm}
\small{\caption{\em Description of the full set of items.}}\label{tab:app}
\end{table}

\bibliographystyle{apalike}
\bibliography{reference}
\end{document}